\newtheorem{myth}{\bf Theorem}
\newtheorem{myle}[myth]{\bf Lemma}
\newtheorem{mydef}{\bf Definition}
\newcommand{\bx}{\boldsymbol x}
\newcommand{\bg}{\boldsymbol g}
\newcommand{\mK}{\mathcal{K}}
\newcommand{\mR}{\mathcal{R}}
\newcommand{\mX}{\mathcal{X}}
\newcommand{\mT}{\mathcal{T}}
\begin{document}
\sloppy \hyphenation{net-works}
\title{Capacity of Wireless Distributed Storage Systems with Broadcast Repair
\thanks{Part of this work was presented in ICICS 2015~\cite{myicics15}.}\thanks{This work was supported in part by a grant from the University Grants
Committee of the Hong Kong Special Administrative Region, China, under
Project AoE/E-02/08, and in part by ARC Discovery Project DP150103658.}}
\author{\large Ping Hu, Chi Wan Sung, Terence H. Chan}
\long\def\symbolfootnote[#1]#2{\begingroup
\def\thefootnote{\fnsymbol{footnote}}\footnote[#1]{#2}\endgroup}
\maketitle
\pagestyle{empty}
\thispagestyle{empty}

%
%
%

\begin{abstract}
In wireless distributed storage systems, storage \mbox{nodes} are
connected by wireless channels, which are broadcast in nature.
This paper exploits this unique feature to design an efficient
repair mechanism, called broadcast repair, for wireless
distributed storage systems in the presence of multiple-node failures. Due to the broadcast nature of wireless transmission, we advocate a
new measure on repair performance called \emph{repair-transmission
bandwidth}. In contrast to repair bandwidth, which measures the
average number of packets downloaded by a newcomer to replace a
failed node, repair-transmission bandwidth measures the average
number of packets transmitted by helper nodes per failed node. A
fundamental study on the storage capacity of wireless distributed
storage systems with broadcast repair is conducted by modeling the
storage system as a multicast network and analyzing
the minimum cut of the corresponding information flow graph. The fundamental
tradeoff between storage efficiency and repair-transmission bandwidth
is also obtained for functional repair. The performance of broadcast repair is compared
both analytically and numerically with that of cooperative repair,
the basic repair method for wired distributed storage systems with
multiple-node failures. While cooperative repair is based on the
idea of allowing newcomers to exchange packets, broadcast repair
is based on the idea of allowing a helper to broadcast
packets to all newcomers simultaneously. We show that
broadcast repair outperforms cooperative repair, offering a better tradeoff between
storage efficiency and repair-transmission bandwidth.
\end{abstract}

Index Terms: Distributed storage systems, wireless cache networks, broadcast
repair, centralized repair, min-cut capacity, repair-transmission bandwidth

\section{Introduction}
Distributed storage systems (DSS) enable users to access data reliably
anywhere and anytime. It has received more and
more interests for its wide applications, such as cloud
computing, file sharing, and peer-to-peer systems.
Since storage nodes may fail at times, reliability should be ensured so that a user can retrieve his/her stored files even when some nodes are not available.
Common ways to provide
reliability are to use \emph{repetition codes} and \emph{erasure
codes}. Repetition codes have been widely used in DSS systems, such as the Google file system~\cite{googlefile}.
Erasure codes are more space efficient than repetition codes~\cite{compare} and have been employed by OceanStore~\cite{oceanstore} and TotalRecall~\cite{totalrecall}.

Apart from reliability, repairability is another important design issue of a DSS.
Since storage node failures are not rare, it is important that failed nodes
can be repaired in an efficient manner. Traditional erasure codes, while having high
storage efficiency, typically require a large amount of data exchange, called {\em repair bandwidth},
during node repair. The replacement node, commonly called the \emph{newcomer}, needs to download the whole file from some or all of the surviving nodes called \emph{helper nodes}, to reconstruct the data the failed node originally stored.
In this case, the repair bandwith is the whole file size.
Dimakis \emph{et al}. showed that repair bandwidth can be reduced at the expense of storage space~\cite{Dimakis}.
By using \emph{information flow graph},
the repair dynamics of a DSS is modeled as a multicast network. The storage
capacity of a DSS, i.e., the maximum file size that can be stored, is shown to be equal to the min-cut of the information flow graph~\cite{networkcoding}. Given a fixed file size, the fundamental tradeoff between
storage efficiency and repair bandwidth has been derived. Codes that achieve optimal tradeoff are named \emph{regenerating codes}. Specifically, on the two extremal points of the curve, codes attaining the best storage efficiency are called minimum-storage regenerating (MSR) codes whereas codes attaining the minimum repair bandwidth are called
minimum-bandwidth regenerating (MBR) codes.
Furthermore, the optimal storage-repair bandwidth tradeoff can be achieved by linear network
codes with finite alphabet even though the information flow graph is unbounded~\cite{wuyunnan_jsac}. These fundamental results are re-examined and proved in~\cite{itw15}.

The seminal work of~\cite{Dimakis} has stimulated a lot of study
on efficient repair of failed nodes in DSS.
Many codes with different repair characteristics have been constructed.
Basically, there are two repair mechanism, functional repair and exact repair.
In \emph{functional repair}~\cite{Dimakis}, the
newcomer is not required to store exactly the same symbols as
the original failed node as long as the data collector is able to retrieve the file.
In \emph{exact repair}~\cite{exact_isit10,exact_repair,exact_erasure,exact_repair12,uncoded}, the repair process should enable the repaired node to reconstruct the same data stored on the failed node.
Most of these works, however, focus on \emph{single-node repair}, which
means that nodes are assumed to be failed one by one and the
repair process is triggered immediately when a node fails. In~\cite{Cooperative_hu}, it was observed that repair
bandwidth per failed node can be reduced if the repair process is
postponed and triggered only when the number of failed nodes
reaches a predetermined threshold. In this mechanism, newcomers first download some data from the surviving nodes, and then exchange some data among themselves. It is termed {\em
cooperative repair}. Coding methods under specific cases for cooperative repair are proposed in~\cite{cooperative_MFR}\cite{cooperative_function_code}. In~\cite{kenneth_jnl}, results on cooperative
repair are further extended to a more general scenario, and the
fundamental tradeoff in a DSS with cooperative repair is derived. Furthermore, exact cooperative regenerating codes at the minimum bandwidth cooperative regeneration point for all possible parameters are constructed explicitly in~\cite{cooperative_exact_code}.

Nowadays due to the increasing use of wireless devices and the popularity of
wireless sensor networks, the design and analysis of wireless distributed
storage systems (WDSS) has become an emerging new area~\cite{access_sensor,wireless_network_coding,wireless_awgn,wireless_download,Energy_jnl,wireless_backup}.
Recently, in the context of mobile cellular systems, it has been proposed to store or cache popular files in the wireless edge by deploying a number of small-cell base stations with large storage capacity~\cite{femtocaching}. Since the storage capacity of modern smart phones or tablets is ever increasing, it is also possible to exploit this capability and store files directly in these devices. When a user requests a file, he or she can retrieve it by downloading data from neighboring devices through device-to-device (D2D) communications~\cite{d2d_2014,wireless_d2d,d2d_2016}. In such a scenario, it is necessary to repair lost data when a device becomes unavailable such as running out of battery or moving out of the area.

The repair problem in WDSS has been investigated in~\cite{two_layer,repair_enegry,repair_space_time,xiaoming,xiaoming14,xiaoming_broadcast,wireless_d2d,d2d_2016,mingjun2017}.
Among them, the works closest to ours are~\cite{xiaoming,xiaoming14,xiaoming_broadcast}.
In~\cite{xiaoming}\cite{xiaoming14}, the fundamental storage-bandwidth tradeoff for
single-node repair in DSS with erasure channels are established.
In~\cite{xiaoming_broadcast}, it considered the repair problem when parts of stored packets in multiple nodes are lost. The channel for repair is assumed broadcast without interference.
The work focused on one repair round and obtained the minimum transmitted packets for repair. For a special parameter setting, an exact repair code construction is proposed.
Repair under multiple repair rounds is unclear for general parameter settings.

Concerning the repair problem, while designs for DSS can also be
applied to WDSS, it is important to understand the fundamental
difference between DSS and WDSS.
One basic characteristic of the wireless medium, which
distinguishes it from wired transmission, is its {\em broadcast}
nature.
To design an efficient WDSS, the broadcast nature of the wireless medium
could be exploited during the repair process for multiple failed nodes.
To see this, consider the example shown in
Figure~\ref{fig:broadcast_repair}. There are 4 storage nodes. The
file, which is divided into $A_1,A_2,B_1$, and $B_2$, can be
retrieved from any two storage nodes. Suppose two of the nodes
fail as indicated. To replace them, two newcomers join the system,
and the two surviving storage nodes broadcast packets to them.
As shown in the figure, by broadcasting 4 packets in total,
the content originally stored in the failed nodes can be
regenerated by the newcomers. The average number of packet
transmissions needed per newcomer is 2. Suppose we
repair the failed nodes one by one using unicast transmission. The packet would be counted twice, which is 8, in calculating
repair bandwidth. Each newcomer requires 4 packets for repair.
Furthermore, as shown in Fig. 2, if we repair the failed nodes by cooperative repair~\cite[Section I]{kenneth_jnl}, the transmitted packets we need is~6, meaning that the average number of packet
transmissions needed per newcomer is~3.
From this example, we see that the number of packet transmitted over the air can be reduced by broadcast.

\begin{figure}
 \centering
 \includegraphics[width=5.5in,angle=0]{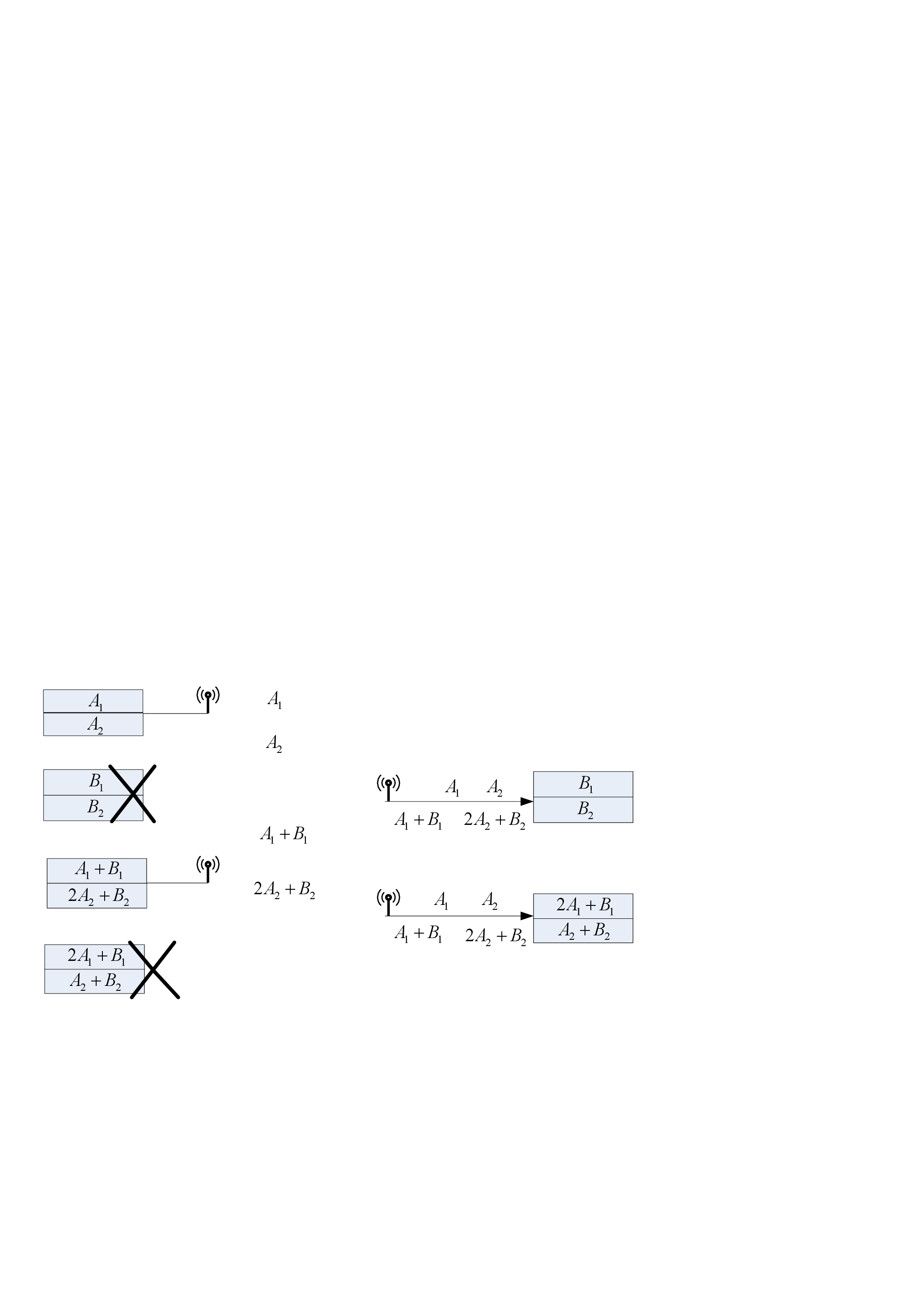}
 \caption{An example for broadcast repair in WDSS.}~\label{fig:broadcast_repair}
\end{figure}

\begin{figure}
 \centering
 \includegraphics[width=5.5in,angle=0]{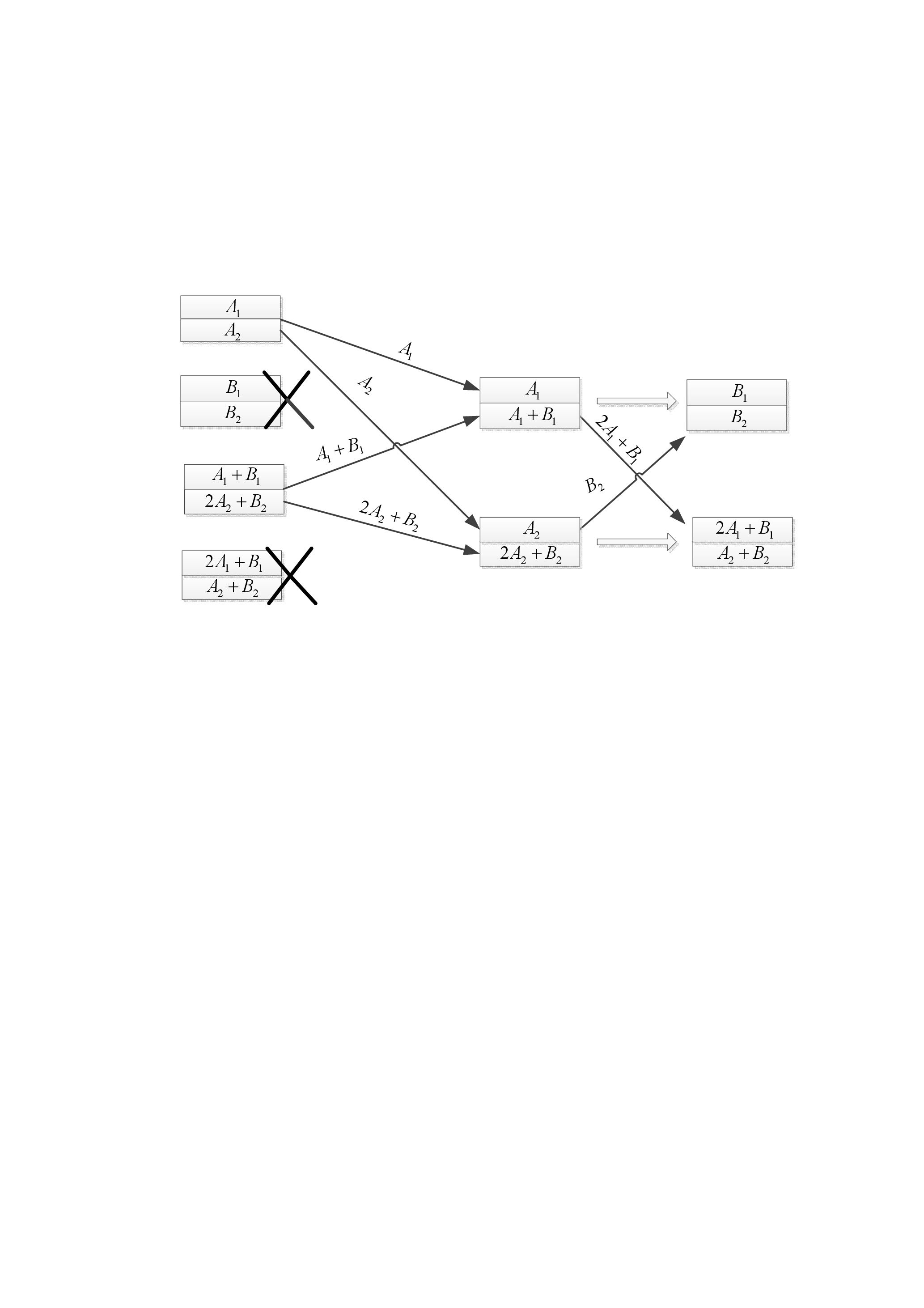}
 \caption{An example for cooperative repair~\cite[Section I]{kenneth_jnl}.}~\label{fig:cooperative_repair}
\end{figure}

To reap the potential gain, we propose the concept of {\em broadcast repair} for WDSS with multiple node failures. We focus on functional repair in this work, so that a graph representation for WDSS is constructed. By analyzing the min-cut of the graph, the storage capacity under finite number of repairs is derived.
The storage capacity obtained is a monotonic decreasing function of the number of repair rounds, but it becomes a constant when the number of repair rounds increases beyond a certain threshold.
We also prove that the storage capacity is achievable by a code with finite alphabet even if there are infinite repair rounds for functional repair.
To quantify the benefit of broadcast repair, we compare our method with cooperative
repair with unicast transmissions. An explicit
form on the storage capacity is derived, and its superiority over cooperative repair for WDSS
is shown analytically. 

Finally, it should be noted that our model for broadcast repair, studied in our conference version~\cite{myicics15}, is equivalent to the model for centralized repair, investigated in~\cite{centralized_repair,centralized_repair_j}. The work in~\cite{centralized_repair,centralized_repair_j} focuses on the minimum-storage point and the minimum-bandwidth point. It is shown that the minimum-storage point under centralized repair can be achieved by the same codes that achieve the minimum-storage point under cooperative repair. Besides, explicit code constructions at this point with exact repair are considered in~\cite{centralized_repair,centralized_repair_j,highrate_MDS}. At the minimum-bandwidth point, exact-repair codes for systems that satisfy a certain technical property are constructed in~\cite{centralized_repair,centralized_repair_j}. In this paper, we characterize the entire storage-bandwidth tradeoff curve, rather than only the two extreme points. Moreover, our approach, based on analyzing the information flow graph, is different from the study in~\cite{centralized_repair}.

Our paper is organized as follows. In Section~\ref{sec:model}, we
describe the system model. In Section~\ref{sec:graph}, we
represent the WDSS by a graph. In Section~\ref{sec:finite rounds}, we
derive the min-cut value of the graph, which serves as an upper bound of the storage capacity.
In Section~\ref{sec:inf_rounds}, we show that the min-cut value is achieveable under infinite repair rounds.
We derive the explicit form of broadcast repair storage capacity
and compare its performance with cooperative repair both analytically and numerically in Section~\ref{sec:compare}.
Section~\ref{sec:conclu} concludes this paper.

\section{System Model and Broadcast Repair}\label{sec:model}
The system model is designed to capture the broadcast
characteristic under the wireless scenario. It includes one source
node, multiple storage nodes, and multiple data collectors. Each
storage node can store $\alpha$ packets at most. Storage nodes are not point-to-point connected, but fully connected by a wireless broadcast medium.

At the initial stage, the source node distributes a file into $n$
storage nodes such that the data collectors can
retrieve the file from any $k$ nodes. We
index these storage nodes by the set $\mathcal{N}\triangleq
\{1,2,\dots,n\}$. After the initialization, the source node
becomes inactive and leaves the system. In other words, the initialization is simply writing data onto the storage nodes.

These $n$ storage nodes are not reliable and can fail at times (becoming inactive). When the number of failed nodes
accumulates up to a threshold $r$, the repair
process is triggered. We call this process one {\em round} of repair. During each
repair round, $r$ new nodes, called \emph{newcomers}, join
the system to replace the failed nodes. To regenerate the lost data, any $d\geq k$ active nodes can be chosen as
\emph{helper nodes} to broadcast packets to the newcomers. The number of packets broadcast by each helper is denoted by $\beta$. To ensure enough helper nodes, we require
\begin{equation} \label{nrd}
n-r\geq d.
\end{equation}
When a packet is broadcast for data repair, we assume that it can be received successfully by all
newcomers without error. Besides, we also assume that the helpers use orthogonal channels for transmissions so that there is no interference between them. We index the
newcomers after the $s$-th repair round by
$\mathcal{R}_s\triangleq\{n+(s-1)r+1,\dots,n+sr\}$. The set of
helpers in this repair round is denoted by $\mathcal{H}_s$.

Any data collector can join the system after the initialization
stage or after any repair round. It can connect to any $k$
active nodes via $k$ orthogonal channels. We assume that the data collector
retrieves all data stored in the nodes it contacts, which can be modeled by
assuming each channel has infinite capacity. In practice, it is possible that a data
collector joins the system when some nodes are failed but the next repair round, say $s+1$, has not been triggered or finished.
Because of~\eqref{nrd}, there are always $k$ or more active nodes in the system for the data collector to retrieve the file.
Such a data collector can be regarded as joining the system right after the $s$-th round. Therefore, without loss of generality, we assume that all data collectors join the system right after the completion of a repair round.

For ease of presentation, we call the initialization stage the 0-th repair round. We denote the data collector which joins after the $s$-th repair round and connects to a set $\mK$ of $k$ active nodes by $\mathsf{DC}_{s,\mK}$.
Since we have to ensure that the file can always be retrieved, we consider all possible arrivals (in terms of $s$) and connections (in terms of $\mK$) of a data collector.

Denote the total number of repair rounds by $T$. The above system is called a WDSS with parameters
$(n,k,d,r,\alpha,\beta,T)$. An \emph{instance} of a
WDSS is determined by the failure patterns, newcomers
$\mathcal{R}_1,\mathcal{R}_2,\dots,\mathcal{R}_T$, and the
collection of helper sets
$\mathcal{H}_1,\mathcal{H}_2,\dots,\mathcal{H}_T$. The repair process described
above is called \emph{broadcast repair}. Denote the storage capacity, which is the maximum file size that can be supported by $\text{WDSS}(n,k,d,r,\alpha,\beta,T)$ under broadcast repair, by
$C_{\text{storage}}^T$. When the WDSS is used indefinitely, so that there is no restriction on the number of repair rounds, we denote the maximum file size that can be supported by $C_{\text{storage}}$. It is clear that $C_{\text{storage}} \leq C_{\text{storage}}^T$ for any $T$.


In the literature of DSS, the total number of packets downloaded
by a newcomer so as to repair a failed node is called {\em repair
bandwidth}~\cite{Dimakis}. It is one of the key performance
metrics in DSS, reflecting the amount of network traffic
required in the repair process. The same concept can also be applied to
multiple node failures with cooperative repair
processes~\cite{Cooperative_hu,cooperative_MFR,cooperative_function_code,kenneth_jnl}. In a wireless environment,
repair bandwidth, however, is not an accurate measure on network traffic, especially when there are multiple node failures.
To better reflect the use of frequency spectrum
in a wireless environment, we introduce a new performance metric
named \emph{repair-transmission bandwidth}:
\begin{mydef}
 The repair-transmission bandwidth, $\tau$, is defined as the number of packets the helper nodes transmitted per newcomer.
\end{mydef}

If all the packet transmissions are in unicast mode, then
repair-transmission bandwidth is equal to repair bandwidth, since
the total number of packets transmitted by the helpers is equal
to the total number of packets received by the newcomers. They are
different, however, when packets transmissions are in broadcast mode. For
the WDSS model described above, we have
$$
\tau=\frac{d\beta}{r}.
$$

When $r\geq k$, the $d$ helpers should transmit the whole file to the $r$ newcomers
because if a data collector connects to $k$ of these $r$ newcomers, it
should be able to retrieve the file. For this case, we can directly obtain
$$
\tau=\frac{C_{\text{storage}}}{r},
$$
no matter how large the storage space $\alpha$ of a node is.
Each newcomer can then reconstruct the original file and re-encode it in the same way as the source had done in the initial stage. This corresponds to exact repair, and obviously the system can sustain for infinite repair rounds.

When $r < k$, it may not be necessary for the helpers to transmit the whole file to the $r$ newcomers. In this paper, we consider only this non-trivial case. We will see that given a requirement on storage capacity $C_{\text{storage}}$,
there is a tradeoff between the per-node storage capacity, $\alpha$, and the repair-transmission bandwidth, $\tau$. For easy reference, we summarize our notation in Table~I.

\begin{table*}[htbp]\label{table:notation_model}
   \renewcommand{\arraystretch}{1.3}
   \caption{Notation summary}
   \label{tab}
   \centering
   \begin{tabular}{|c|c|}
     \hline
     Symbol & Definition\\
     \hline
     $n$ & number of storage nodes \\
     \hline
     $k$ & minimum number of storage nodes required for file reconstruction\\
     \hline
     $s$ & repair round index\\
     \hline
     $T$ & total number of repair rounds\\
     \hline
     $d$ & number of helpers in each repair round\\
     \hline
     $r$ & number of newcomers in each repair round\\
      \hline
     $\alpha$ & per-node storage capacity\\
      \hline
     $\beta$ & number of transmitted packets of each helper \\
      \hline
     $\tau$ & number of packets the helpers transmitted per newcomer \\
      \hline
     $ \mathcal{R}_s$ & the set of newcomers in repair round $s$\\
      \hline
     $ \mathcal{H}_s$ & the set of helpers in repair round $s$\\
      \hline
   \end{tabular}
 \end{table*}

\section{Graph Representation}\label{sec:graph}

\begin{figure*}
 \centering
 \includegraphics[width=6.5in,angle=0]{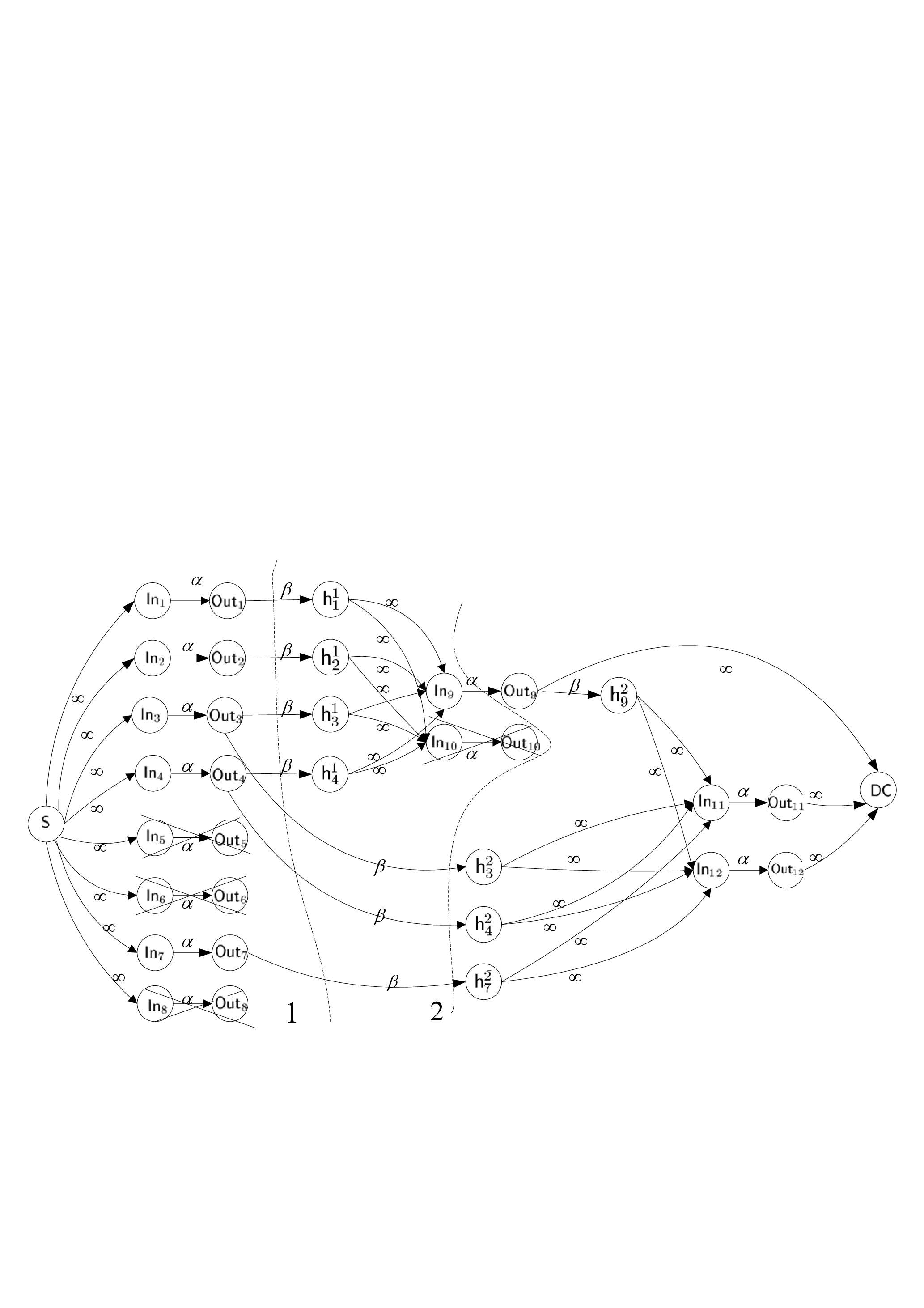}
 \caption{An example for cut-value in broadcast repair.}~\label{fig:define_example}
\end{figure*}

The repair dynamics of a WDSS can be represented by a directed acyclic graph (DAG) $G=(V,E)$, where $V$ is the vertex set and $E$ is the edge set.
Each edge $e(i,j) \in E$, which connects node~$i$ to node~$j$, is associated with a parameter $u_{i,j}$,
which denotes the capacity of the edge.
The graph includes one source vertex $\mathsf{S}$, multiple
storage nodes, and multiple
data collectors $\mathsf{DC}_{s,\mathcal{K}}$'s. Each
storage node $j$ is represented by two vertices, \emph{in-vertex} $\mathsf{In}_j$,
\emph{out-vertex} $\mathsf{Out}_j$, and a directed edge
$\mathsf{In}_j\to\mathsf{Out}_j$ with parameter $\alpha$.
In this paper, the terms ``node'' and ``vertex'' have different
meanings. A node refers to a storage device in the WDSS while a vertex is an abstract entity in
the graph.

In the initialization stage where data is first stored at the storage nodes, the source vertex $\mathsf{S}$
transmits packets to the storage nodes and then becomes inactive. This is
modeled by adding the edges $\mathsf{S}\to\mathsf{In}_j$, for all
$j \in \mathcal{N}$, with capacity $\infty$. Note that this does not mean that the actual transmission rate of the communication link between the source and each storage node is infinite; it only means that all the information in the source link are available in the in-vertices of each storage node, and each storage node can store only $\alpha$ symbols as indicated by the edge capacity of $\alpha$ between the in-vertex and the out-vertex of a storage node.

In the first repair round (i.e., $s=1$), node $i\in\mathcal{H}_1$ broadcasts $\beta$ packets to newcomer $j\in\mathcal{R}_1$, which is again modeled by two vertices $\mathsf{In}_j$, $\mathsf{Out}_j$, and a directed edge $\mathsf{In}_j\to\mathsf{Out}_j$ with parameter $\alpha$.
Note that $\mathcal{R}_1$ and $\mathcal{N}$ are disjoint, meaning that a newcomer has a new index, which is different from the index of the failed node being replaced by that newcomer.
For each helper node $i\in\mathcal{H}_1$, we add an \emph{auxiliary vertex}, say $\mathsf{h}_i^1$. This auxiliary vertex is used to model the broadcast feature of the wireless channel.
Auxiliary vertex $\mathsf{h}_i^1$ outgoes from $\mathsf{Out}_i$  by an edge with capacity $\beta$.
Edges with capacity $\infty$ are added from vertex $\mathsf{h}_i^1$ to in-vertex $\mathsf{In}_j$ of every newcomer $j \in \mathcal{R}_1$.
Subsequent repair rounds are modeled in the same way.

Consider the example shown in Fig.~\ref{fig:define_example}. The corresponding WDSS has parameters $n=8, k=3, d=4$, $r=2$ and $T=2$. In this example, nodes $5$ and $6$ failed in the first repair round, and we have $\mathcal{R}_1=\{9,10\}$ and $\mathcal{H}_1=\{1,2,3,4\}$.
Nodes $8$ and $10$ failed in the second repair round, and we have $\mathcal{R}_2=\{11,12\}$ and $\mathcal{H}_2=\{9,3,4,7\}$.


To model the file retrieval process, after each repair round~$s$ and
for each possible choice of $\mK$, we add a data collector $\mathsf{DC}_{s,\mK}$.
Furthermore, a directed edge from each out-vertex of a node in $\mK$ to
$\mathsf{DC}_{s,\mK}$ with capacity $\infty$ is
added. In Fig.~\ref{fig:define_example}, we show only one data collector, namely, $\mathsf{DC}_{2,\{9,11,12\}}$,
for simplicity.

An $x$-$y$ {\em cut} $\mX$ is a subset of $V$ such that $x\in \mathcal{X}$, $y \in \overline{\mX}\triangleq V \setminus \mX$ and there is at least one edge from $\mX$ to $\overline{\mX}$. The {\em cut-set} of a cut $\mX$ is $\{(u,v) \in E : u\in \mX, v \in \overline{\mX}\}$.
The {\em cut-value} of $\mX$ is defined as:
\begin{equation}
C(\mX)\triangleq\sum_{i\in\mX,j\in\overline{\mX}} u_{ij}.\label{eq:def_cut}
\end{equation}
Two examples of $\mathsf{S}$-$\mathsf{DC}_{2,\{9,11,12\}}$ cuts are denoted in
Fig.~\ref{fig:define_example} by dashed lines.
For line~$1$, the cut-value is $7\beta$, which means that the information that can pass through
this cut is at most $7 \beta$. Note that this is just an upper bound, as the information actually transmitted
over these seven edges can be correlated. Similarly, for line~$2$, the cut-value is $ \alpha+3\beta.$


A WDSS instance $I$ is specified by the failed nodes and the helpers in each repair round. Given any instance $I$, we can construct a graph as described above, which corresponds to a multicast network problem with the single source $\mathsf{S}$ and multiple destinations $\mathsf{DC}_{s,\mK}$, where $s= 0, 1, \ldots, T$, and $\mK$ can be any legitimate choice of storage nodes after repair round~$s$.

According to~\cite{networkcoding}, the capacity of the single-source multicast network is given by the minimum cut-value between the source node and any of the destinations. Therefore, the storage capacity of a particular WDSS instance $I$ with $T$ repair rounds is given by
\begin{align}
C_{\text{storage}}^T(I) = \min_{\mathsf{DC}} \min_{\mX:\mathsf{S}-\mathsf{DC}\text{~cut}}C_I(\mX),\label{eq:def_capa_instance}
\end{align}
where the first minimum is taken over all legitimate choices of $\mathsf{DC}$ under the instance $I$.
This capacity can be achieved by random linear network coding~\cite{random_linear_network}. Since the coding is random, it does not rely on the structure of the network. If the code rate is set to be the minimum value of $C_{\text{storage}}^T(I)$ over all
possible WDSS instances, then that code rate is achievable in all instances.
The storage capacity, $C_{\text{storage}}^T$, of a WDSS with $T$ repair rounds  can be
obtained by minimizing $C_{\text{storage}}^T(I)$ over all its possible instances, i.e.,
\begin{align}
C_{\text{storage}}^T = \min_{I} C_{\text{storage}}^T(I).\label{eq:def_capa}
\end{align}

By definition, it is clear that $C_{\text{storage}}^T$ is monotonic decreasing with $T$. Since zero is a lower bound, the sequence $C_{\text{storage}}^1, C_{\text{storage}}^2, \ldots$ converges to a limit, which we denote by $C_{\text{storage}}^\infty$. Since each term of the sequence is an upper bound of $C_{\text{storage}}$, we must have $C_{\text{storage}} \leq C_{\text{storage}}^\infty$. In the next section, we will show that $C_{\text{storage}}^T = C_{\text{storage}}^\infty$ for all $T \geq k$ and explain how the value of $C_{\text{storage}}^\infty$ can be found by solving a combinatorial optimization problem. In the section after next, we will show that $C_{\text{storage}}$ is indeed equal to $C_{\text{storage}}^\infty$.

\section{Min-Cut Upper Bound of the Storage Capacity}\label{sec:finite rounds}

In this section, we provide an upper bound of the storage capacity by investigating the min-cut value of the family of graphs for a fixed, finite value of $T$. To find the min-cut value, we provide a lower bound in Theorem~\ref{th:bound} and show the tightness of this bound in Theorem~\ref{th:mincut}. For a WDSS with a finite number of repair rounds, this min-cut value can be achieved by linear network coding. For a WDSS which is required to tolerate an infinite number of repair rounds, this min-cut value serves as an upper bound of the storage capacity.



To find a lower bound on the min-cut value for a graph with $T$ rounds, we need to examine all the cuts in the graph.
For any cut, we have the following result:

%


\begin{myth} \label{th:bound}
Consider a graph with $T$ repair rounds, where $T\geq k$. For any $\mathsf{S}-\mathsf{DC}$ cut $K$ in the graph, the cut-value $C(K)$ is bounded below by
\begin{align}
B\triangleq \min_{\bx,\mT_1}\left\{x_0\alpha+\sum_{s\in \mT_1}x_s\alpha+\sum_{s\in \mT_2}(d-\sum_{i=0}^{s-1}x_i)\beta\right\},\label{eq:bound}
\end{align}
where the minimization is taken over $\mT_1\subseteq \mK \triangleq \{1, 2, \ldots , k\}$ and
\begin{align}
&0 \leq x_0 \leq n, \label{eq:x0}\\
  &0 \leq x_s\leq r, \text{ for } s\in \mathcal{K}, \label{eq:xs_range} \\
  &x_0 + \sum_{s\in\mK}x_s= k. \label{eq:sum_xs}
\end{align}
\end{myth}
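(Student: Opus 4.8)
The plan is to fix an arbitrary data collector $\mathsf{DC}$ and an $\mathsf{S}$-$\mathsf{DC}$ cut $K$, and to exhibit a feasible pair $(\bx,\mT_1)$ for which $C(K)$ dominates the objective in \eqref{eq:bound}; since $B$ is the minimum of that objective, this gives $C(K)\ge B$. I may assume $C(K)<\infty$, as otherwise the bound is trivial. The first step is to read off the consequences of finiteness from the infinite-capacity edges. Because each edge $\mathsf{Out}_j\to\mathsf{DC}$ is infinite, every out-vertex contacted by $\mathsf{DC}$ lies in $\overline K$; because each edge $\mathsf{S}\to\mathsf{In}_j$ is infinite, every in-vertex of a round-$0$ node lies in $K$; and because each edge $\mathsf{h}_i^s\to\mathsf{In}_j$ is infinite, as soon as one newcomer $j\in\mathcal{R}_s$ has $\mathsf{In}_j\in\overline K$, all auxiliary vertices $\mathsf{h}_i^s$, $i\in\mathcal{H}_s$, must also lie in $\overline K$. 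This last fact is exactly where the broadcast structure enters: a single family of auxiliary vertices is shared by all newcomers of a round, so the $\beta$-charge of that round is counted once, independently of how many newcomers are cut on the download side.

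Next I would set $W\triangleq\{j:\mathsf{Out}_j\in\overline K\}$ and split the crossing edges into two groups. Writing $a_s$ for the number of round-$s$ nodes in $W$ whose in-vertex lies in $K$, each such node contributes a storage edge $\mathsf{In}_j\to\mathsf{Out}_j$ of capacity $\alpha$; and for every round $s$ in which some $W$-node has its in-vertex in $\overline K$, the previous step forces all $\mathsf{h}_i^s\in\overline K$, so every helper $i\in\mathcal{H}_s$ with $\mathsf{Out}_i\in K$ contributes an edge of capacity $\beta$, i.e.\ the round contributes $(d-|\mathcal{H}_s\cap W|)\beta$. Since every helper of round $s$ was created in an earlier round, $|\mathcal{H}_s\cap W|\le\sum_{i=0}^{s-1}w_i$, where $w_i\triangleq|W\cap(\text{round }i)|$. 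Collecting the two groups yields $C(K)\ge\sum_s a_s\alpha+\sum_{s\,:\,b_s>0}\bigl(d-\sum_{i=0}^{s-1}w_i\bigr)\beta$, where $b_s=w_s-a_s$, and all the $\beta$-coefficients that survive are non-negative because the contacted nodes number $k$ and $d\ge k$.

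The crux is to show that this lower bound is itself at least $B$. If the cut is \emph{canonical}, meaning $W$ is exactly the set of $k$ contacted nodes, then grouping those $k$ nodes by round gives $\bx$ with $w_i=x_i$, the classes $\mT_1=\{s:b_s=0\}$ and $\mT_2=\mK\setminus\mT_1$, and the two groups sum precisely to $x_0\alpha+\sum_{s\in\mT_1}x_s\alpha+\sum_{s\in\mT_2}(d-\sum_{i=0}^{s-1}x_i)\beta$, which is feasible by \eqref{eq:x0}--\eqref{eq:sum_xs} and hence at least $B$. The main obstacle is the general case $|W|>k$, where extra out-vertices placed in $\overline K$ can shrink the $\beta$-charge of later rounds without being contacted by $\mathsf{DC}$; one cannot simply discard them, since reducing the mass would raise the surviving $\beta$-coefficients. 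I expect the right way to discharge this is to prove that the combinatorial minimum of the right-hand side above, taken over all $a_s,b_s\ge0$ with $w_0\le n$, $w_s\le r$ and $\sum_s w_s\ge k$, is unchanged when the constraint is tightened to $\sum_s w_s=k$ — that is, every extra $W$-node in round $i$ either pays its own $\alpha$ (and can be charged against the term $x_0\alpha$, whose range $x_0\le n$ is generous) or forces round $i$ into the download class and pays its own $\beta$, so the savings it produces downstream are never larger than the charge it incurs. I would formalise this by an exchange argument on $(a_s,b_s)$ that moves mass to the earliest rounds, or by induction on the number of repair rounds that peels off the last download round; the delicate bookkeeping is ensuring that reclassifying a round never removes an $\alpha$-charge that the objective still demands. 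Once this reduction is complete, $C(K)\ge B$ follows by the minimality of $B$.
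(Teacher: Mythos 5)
Your setup is sound and runs parallel to the first half of the paper's proof: the round-by-round decomposition, the observation that a single newcomer in-vertex in $\overline K$ forces \emph{all} auxiliary vertices of that round into $\overline K$ (the paper's $\mT_1$/$\mT_2$ dichotomy), and the canonical case where $W$ equals the $k$ contacted nodes are all handled correctly. The problem is that the entire difficulty of the theorem sits in the step you leave as a plan, and the plan as stated cannot be carried out. First, your interim claim that ``all the $\beta$-coefficients that survive are non-negative because the contacted nodes number $k$ and $d\ge k$'' is false once $|W|>k$: the sum $\sum_{i<s}w_i$ counts every out-vertex in $\overline K$, not just contacted ones, and can exceed $d$. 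Second, and fatally, the reduction lemma you propose to prove is false as stated: minimizing your uncapped surrogate $\sum_s a_s\alpha+\sum_{s:b_s>0}(d-\sum_{i<s}w_i)\beta$ over all configurations with $\sum_s w_s\ge k$ gives a value strictly below $B$, so no exchange argument on that relaxation can ever establish $C(K)\ge B$. For instance, take $w_0=n$ (note $n\ge d+r>d$ in this model) and $k$ later download rounds with $w_s=1$, $a_s=0$: every surviving coefficient is $d-n-(s-1)<0$, and for $\alpha$ small relative to $\beta$ the surrogate drops below $B$, indeed below $0$; with many rounds all of download type ($w_s=r$, $a_s=0$) it is unbounded below as $T$ grows. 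The true cut values do not misbehave this way because the number of crossing helper edges is a count, i.e.\ the honest coefficient is $(d-|\mathcal{H}_s\cap W|)\ge 0$; so any workable version of your plan must use $(d-\sum_{i<s}w_i)^{+}$, and then your per-node charging heuristic (``the savings it produces downstream are never larger than the charge it incurs'') still fails, since one extra $\alpha$-paying vertex lowers the $\beta$-coefficient of \emph{every} later download round, producing savings of several $\beta$ against a charge of one $\alpha$.

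The clean way to close the gap is not to tame arbitrary $W$ at all, but to change which $k$ out-vertices you charge: select the topologically \emph{first} $k$ out-vertices of $\overline K$ (they exist because the $k$ contacted ones lie in $\overline K$), and let $x_s$ count only the selected ones. Then every out-vertex of $\overline K$ lying in an earlier round than a selected vertex is itself selected, so in a download round the number of helpers whose out-vertex is in $\overline K$ is at most $\sum_{i<s}x_i\le k\le d$; the coefficients stay non-negative, the constraint $\sum_s x_s=k$ holds exactly, and $C(K)\ge B$ follows immediately from feasibility of $(\bx,\mT_1)$. The paper takes a different but related route: it restricts attention to the rounds containing the contacted vertices, normalizes the cut to a special cut $K^*$ (moving auxiliary vertices or in-vertices across the cut without increasing its value), and then disposes of the resulting constraint $x_0+\sum_s x_s\ge k$ by an argument at the level of the minimization problem showing it can be replaced by equality. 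Some device of this kind --- either the topological selection or the paper's normalization-plus-optimization step --- is exactly the missing ingredient in your proposal, and it is the substance of the theorem rather than ``delicate bookkeeping.''
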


\begin{figure}
 \centering
 \includegraphics[width=4in,angle=0]{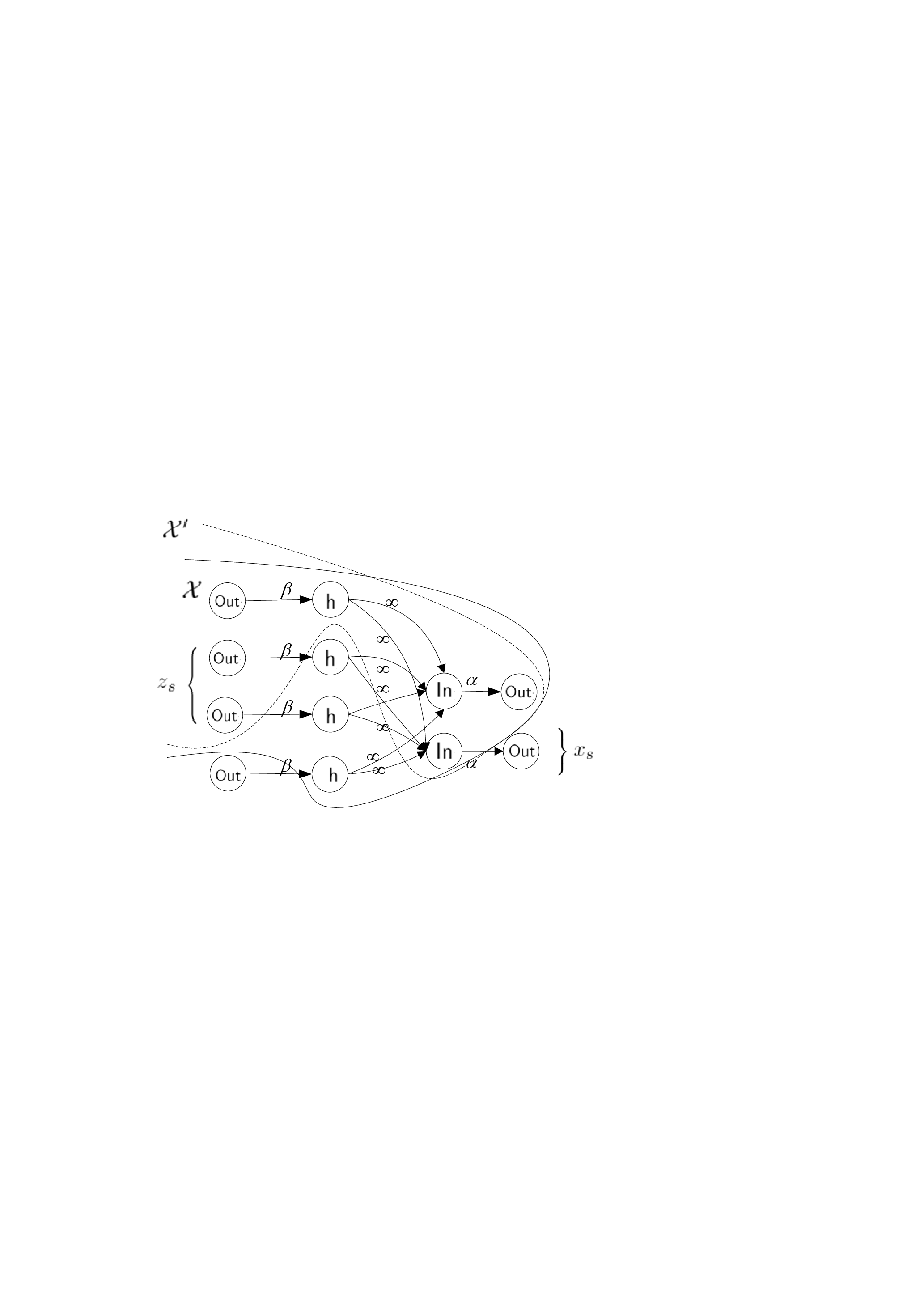}
 \caption{An illustration of case where $s\in \mT_1$.}~\label{fig:analysis_T1}
\end{figure}
\begin{figure}
 \centering
 \includegraphics[width=4in,angle=0]{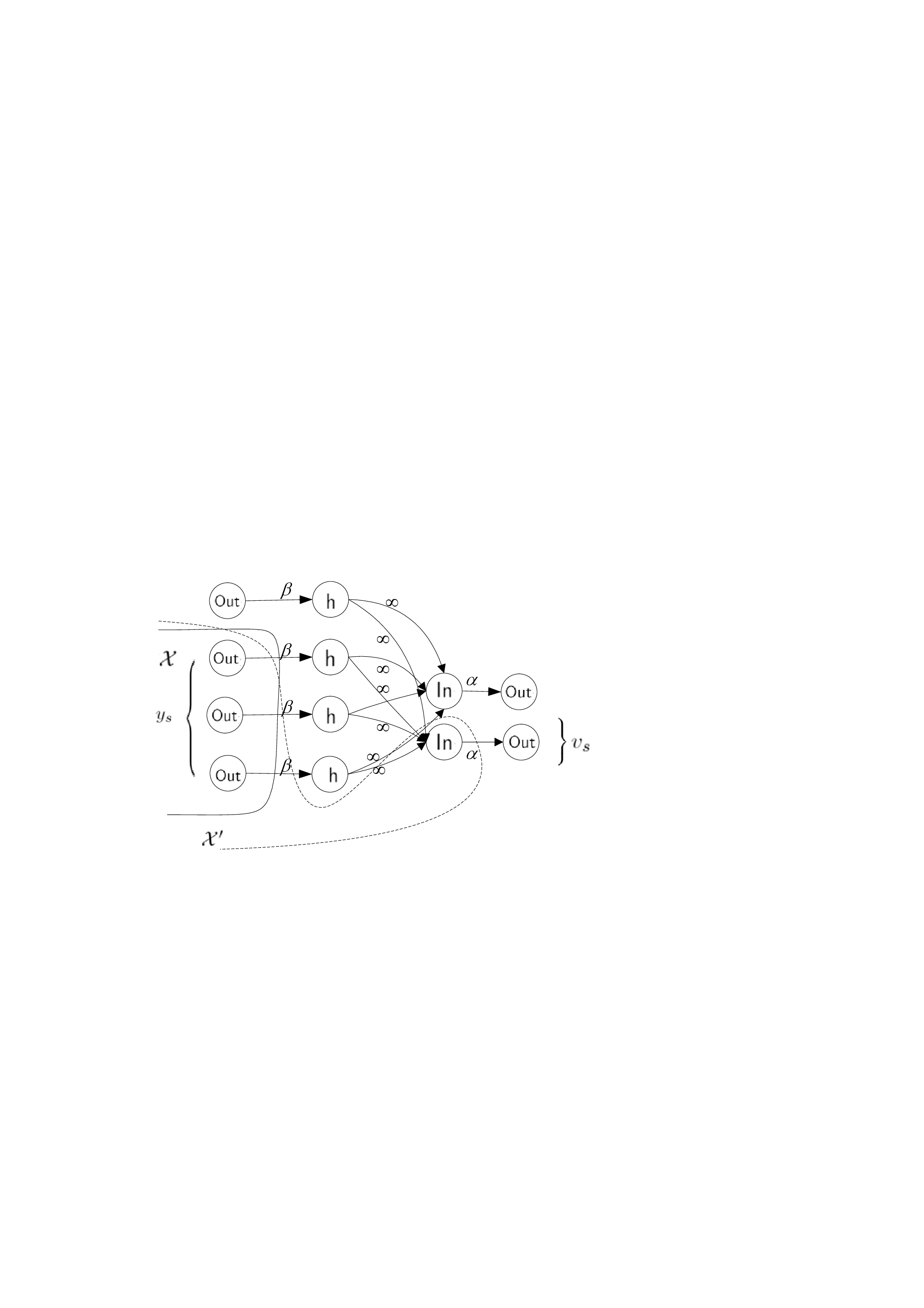}
 \caption{An illustration of case where $s\in \mT_2$.}~\label{fig:analysis_T2}
\end{figure}

\begin{proof}
Consider an arbitrary instance of the WDSS. Regard the initialization stage as round 0 and let $\mathcal{V}_0\triangleq \mathcal{N}$. For rounds $s=1,2,\ldots,T$, let $\mathcal{A}_s \triangleq \{\mathsf{h}_i^s : i\in\mathcal{H}_s\}$ be the set of auxiliary vertices in round~$s$, and $\mathcal{V}_s\triangleq \mathcal{A}_s \cup \{\mathsf{In}_j, \mathsf{Out}_j : j \in \mR_s\}$ be the set of all vertices in round~$s$. Then $\mathcal{V}_0 \cup \mathcal{V}_1\cup \cdots \cup \mathcal{V}_T$ contains all the vertices except the source and the destinations in the graph. For $s=0, 1, \ldots, T$, let $x_s$ be the number of out-vertices in $\overline{K}$.

To obtain the cut-value of an arbitrary $\mathsf{S}-\mathsf{DC}$ cut $K$, we examine the in-edges of all the vertices in $\mathcal{V}_0\cup\mathcal{V}_1\cup \cdots \cup\mathcal{V}_T$, and express the cut-value as a sum of $T+1$ terms:
\begin{equation}
C(K)= \sum_{0\leq s\leq T}C_{\vartriangle,s}(K),
\end{equation}
where $$C_{\vartriangle,s}(K)  \triangleq \sum_{i\in \mathcal{V}_s\cap K, j\in \mathcal{V}_s\cap \overline{K}}u_{ij}$$ is called the \emph{cut-value contribution} of the vertices in $\mathcal{V}_s$. When there is no ambiguity, we may simply write it as $C_{\vartriangle,s}$. For example, in Fig.~\ref{fig:define_example}, the cut denoted by line~2 has cut-value equal to $C_{\vartriangle,0}+C_{\vartriangle,1}+C_{\vartriangle,2} = 0 + \alpha + 3\beta$.

For any $\mathsf{S}$-$\mathsf{DC}$ cut $K$, it is obvious that $\mathsf{DC}\in \overline{K}$. By definition, the $\mathsf{DC}$ has $k$ out-vertices as its parents, and these $k$ edges all have infinite capacity. If the cut-value is finite, then these $k$ out-vertices must be in $\overline{K}$. We can always find $k$ repair rounds, together with the initial stage, which contain all these $k$ out-vertices. For ease of presentation, we re-index these repair rounds as $1, 2, \ldots, k$. We consider only these $k$ (re-indexed) repair rounds to obtain a lower bound of $C(K)$:
\begin{equation}
C(K)\geq C_{\vartriangle,0}+\sum_{1\leq s\leq k}C_{\vartriangle,s}(K). \label{eq:sum_contri}
\end{equation}
From now on, the remaining $T-k$ repair rounds that are not re-indexed will not occur in our discussion. We will consider only the re-indexed repair rounds, with index set $\mK$.

In the initial stage, since there is no auxiliary vertex in repair round~0 and all in-vertices should be in $K$ if $C(K)\neq \infty$, we have
$
C_{\vartriangle,0}=x_0\alpha.
$

For the repair rounds in $\mK$, we have two cases.
First, consider the case where $s\in \mT_1$, where $\mT_1 \triangleq \{s \in \mK : \mathcal{A}_s \cap K \neq \emptyset\}$. In other words, $s \in \mT_1$ if there exists at least one $\mathsf{h}_i^s$ in $K$. We investigate the three classes of vertices in $\mathcal{V}_s$, i.e., auxiliary vertices, in-vertices, and out-vertices, one by one. For the auxiliary vertices, denote the number of $\mathsf{h}_i^s$'s such that it is in $\overline{K}$ and its parent vertex $\mathsf{Out}_i$ is in $K$ by $z_s$. For the in-vertices, we only need to consider the case where all of them are in $K$, for otherwise, the cut-value contribution would be infinite, since all its in-edges have infinite capacity and by definition of $\mT_1$, at least one of its parent vertices (i.e. $\mathsf{h}_i^s$) is in $K$. For the out-vertices, since all in-vertices are in $K$, and the number of them in $\overline{K}$ is $x_s$, we have
$
C_{\vartriangle,s}=x_s\alpha+z_s\beta.
$
An illustration of this case is shown in Fig.~\ref{fig:analysis_T1}. The vertices in the left side of the dash line belong to $K$ (denoted as $\mX'$ in the figure).

Second, consider the case where $s\in \mT_2\triangleq \mK\setminus\mT_1$. By definition of $\mT_2$, all $\mathsf{h}_i^s$'s are in $\overline{K}$.
For the auxiliary vertices, denote the number of $\mathsf{h}_i^s$'s such that its parent vertex $\mathsf{Out}_i$ is in $K$ by
$y_s$. For all the in-vertices, since their parent vertices (i.e. $\mathsf{h}_i^s$) are all in $\overline{K}$, their cut-value contribution is zero, no matter they are in $K$ or $\overline{K}$. For the out-vertices, denote the number of them such that $\mathsf{Out}_j\in \overline{K}$ and its parent vertex $\mathsf{In}_j\in K$ by $v_s$.
we have
$
C_{\vartriangle,s}=v_s\alpha+y_s\beta.
$
An illustration of this case is shown in Fig.~\ref{fig:analysis_T2}. The vertices in the left side of the dash line belong to $K$ (denoted as $\mX'$ in the figure).

Combining the above two cases and according to~\eqref{eq:sum_contri}, we have
\begin{align}
C(K)\geq x_0\alpha+\sum_{s\in \mT_1}\left(x_s\alpha+z_s\beta\right)+\sum_{s\in \mT_2}(v_s\alpha+y_s\beta)\notag.
\end{align}

Now consider a special cut $K^*$, which is constructed from $K$ as follows. Initially, let $K^*$ be the same as $K$. For $s\in \mT_1$, move all $\mathsf{h}_i^s$'s into $K^*$, and then $z_s$ becomes zero.
Note that, since $\mathsf{h}_i^s$'s child vertices are all in round $s$, moving all $\mathsf{h}_i$'s into $K^*$ will not affect the cut-value contribution of other repair rounds. For $s\in \mT_2$, move all $\mathsf{In}_j$'s into $\overline{K^*}$, and $v_s$ becomes zero. Again, since $\mathsf{In}_j$'s child vertex $\mathsf{Out}_j$ is in the same repair round, moving $\mathsf{In}_j$ will not affect the cut-value contribution of other repair rounds. For the example in Fig.~\ref{fig:analysis_T1} and Fig.~\ref{fig:analysis_T2}, the corresponding $K^*$ (denoted as $\mX$ in the figures) are the vertices in the left side of the solid line.
We have
\begin{align}
C(K)\geq C(K^*)\geq x_0\alpha+\sum_{s\in \mT_1}x_s\alpha+\sum_{s\in \mT_2}y_s\beta,\label{eq:bound_x'}
\end{align}
where $\mT_1$ is the index set of repair rounds whose auxiliary vertices and in-vertices are all in $K^*$, and $\mT_2$ is the index set of repair rounds whose auxiliary vertices and in-vertices are all in $\overline{K^*}$.

The newcomers in round~$s$ are connected to $d$ helpers, which are located in rounds $0, 1, \ldots, s-1$. Of these $d$ helpers, at most $\sum_{i=0}^{s-1}x_i$ have their out-vertices in $\overline{K^*}$. Therefore, we have
\begin{align*}
y_s \geq d-\sum_{i=0}^{s-1}x_i,\text { for } s\in \mT_2.
\end{align*}
Thus we obtain the lower bound~\eqref{eq:bound}.

Next, we consider the constraints. It is clear that~\eqref{eq:x0} and~\eqref{eq:xs_range} must hold. Since the initial stage and the $k$ repair rounds have $k$ out-vertices in $\overline{K}$, we must have
\begin{equation}
x_0 + \sum_{s\in\mK}x_s\geq k. \label{eq:sum_xs_more_k}
\end{equation}
Now we show that the inequality in~\eqref{eq:sum_xs_more_k} can be replaced by an equality. To see this, suppose $(x_0^*, x_1^*, \ldots, x_k^*, \mT_1^*)$ is an optimal solution, which yields the minimum value $B^*$. Let $l$ be the first repair round after which $\sum_{s=0}^l x_s^* \geq k$. Suppose $l$ is not the last round (i.e. $l \neq k$). If $k \in \mT_1^*$, then $x_k^*$ must be equal to 0, for otherwise, we can reduce its value and $B^*$ cannot be the minimum. If $k \in \mT_2^*$, we move $k$ into $\mT_1^*$ and set $x_k^* := 0$. Since this does not change the value of $B^*$, we can assume $k \in \mT_1^*$ and $x_k^* = 0$. The same argument can be repeatedly applied to round $k-1$, round $k-2$ and so on, so that we can assume $s \in \mT_1^*$ and $x_s^* = 0$ for all $s > l$. Consider round $l$. If $l \in \mT_1^*$, then $\sum_{s=0}^l x_s^*$ must be equal to $k$, for otherwise we can reduce the value of $x_l^*$ to obtain a value lower than $B^*$. If $l \in \mT_2^*$, we can reduce the value of $x_l^*$ by $\sum_{s=0}^l x_s^*-k$ without changing the value of $B^*$. Hence, replacing the inequality in~\eqref{eq:sum_xs_more_k} by an equality does not affect the value of the lower bound.
\end{proof}

\begin{myth} \label{th:mincut}
$C_{\text{storage}}^T = C_{\text{storage}}^\infty = B$ for all $T \geq k$.
\end{myth}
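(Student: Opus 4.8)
The plan is to obtain $B$ as matching lower and upper bounds on $C_{\text{storage}}^T$, and then to invoke monotonicity to cover all $T\ge k$ at once and to identify the limit $C_{\text{storage}}^\infty$. The lower bound is immediate: Theorem~\ref{th:bound} asserts that \emph{every} $\mathsf{S}$-$\mathsf{DC}$ cut in \emph{every} instance with $T\ge k$ rounds has value at least $B$. Since $C_{\text{storage}}^T(I)$ is a minimum of cut-values by~\eqref{eq:def_capa_instance} and $C_{\text{storage}}^T$ is a minimum over instances by~\eqref{eq:def_capa}, this gives $C_{\text{storage}}^T\ge B$ for all $T\ge k$.

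The substance of the proof is the matching upper bound, for which I would construct a single instance $I^*$ with exactly $T=k$ rounds, together with a data collector and a cut $K$ attaining $C(K)=B$. Let $(x_0^*,x_1^*,\ldots,x_k^*,\mT_1^*)$ be an optimal solution of the program defining $B$ in~\eqref{eq:bound}. In $I^*$ the data collector connects to $x_0^*$ of the original nodes and to $x_s^*$ newcomers of round~$s$; by~\eqref{eq:sum_xs} this is a legitimate set of $k$ active nodes. The cut $K$ places into $\overline{K}$ exactly these $k$ out-vertices and follows the pattern of the $K^*$ in the proof of Theorem~\ref{th:bound}: for $s\in\mT_1^*$ the auxiliary and in-vertices of round~$s$ go into $K$, while for $s\in\mT_2^*$ they go into $\overline{K}$. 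The crux is choosing the helper sets $\mathcal{H}_s$ so that the one inequality that was slack in Theorem~\ref{th:bound}, namely $y_s\ge d-\sum_{i=0}^{s-1}x_i^*$, becomes an equality: for each $s\in\mT_2^*$ I would let $\mathcal{H}_s$ contain all $\sum_{i=0}^{s-1}x_i^*$ earlier out-vertices that lie in $\overline{K}$, so that their $\beta$-edges do not cross the cut, together with $d-\sum_{i=0}^{s-1}x_i^*$ further helpers whose out-vertices lie in $K$. A direct computation of $C(K)$ then returns $x_0^*\alpha+\sum_{s\in\mT_1^*}x_s^*\alpha+\sum_{s\in\mT_2^*}(d-\sum_{i=0}^{s-1}x_i^*)\beta=B$, whence $C_{\text{storage}}^k\le B$.

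Combining the two bounds gives $C_{\text{storage}}^k=B$. Since $C_{\text{storage}}^T$ is monotonically decreasing in $T$ and bounded below by $B$ for every $T\ge k$, the chain $B=C_{\text{storage}}^k\ge C_{\text{storage}}^{k+1}\ge\cdots\ge B$ forces $C_{\text{storage}}^T=B$ for all $T\ge k$. The sequence is therefore eventually constant, so its limit $C_{\text{storage}}^\infty$ equals $B$ as well, which is exactly the asserted identity.

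I expect the main obstacle to be the feasibility bookkeeping in the construction: I must exhibit an actual failure/repair schedule that realizes the chosen helper sets and data-collector connections. This requires checking that the $\sum_{i=0}^{s-1}x_i^*$ designated helpers are still active in round~$s$, that enough additional active nodes with out-vertices in $K$ remain to complete each $\mathcal{H}_s$, and that the newcomer index sets stay disjoint — all of which lean on the standing assumptions $n-r\ge d$ from~\eqref{nrd}, $d\ge k$, and the bound $\sum_{i=0}^{s-1}x_i^*\le k\le d$. Arranging the failures so that every data-collector-connected node survives all $k$ rounds while still retiring $r$ nodes per round is routine but needs care to keep the designated helpers available in each $\mT_2^*$ round.
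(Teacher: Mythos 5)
Your proposal is correct and takes essentially the same approach as the paper: the lower bound is quoted from Theorem~\ref{th:bound}, and tightness is shown by an instance in which the data collector connects to $x_0^*$ original nodes and $x_s^*$ newcomers of each round~$s$, the helpers are chosen to include all earlier collector-connected nodes, and the cut follows the $\mT_1^*/\mT_2^*$ pattern --- precisely the paper's $I^*$, $\mathsf{DC}^*$, and $\mX^*$, with your monotonicity step for $T>k$ and the limit being a harmless repackaging of what the paper already noted in Section~\ref{sec:graph}. The feasibility bookkeeping you defer is exactly what the paper carries out explicitly (its failure pattern and helper-selection rule, verified via~\eqref{eq:valid1}--\eqref{eq:valid3} using $n-r\geq d$ and $\sum_{s\in\mK}x_s^*\leq k\leq d$), and it goes through just as you anticipate.
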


\begin{proof}
Theorem~\ref{th:bound} states that $C_{\text{storage}}^T \geq B$ for all $T \geq k$. It remains to prove that the lower bound is tight. Let $(\bx^*,\mT_1^*)$ be an optimal solution to the minimization in Theorem~\ref{th:bound}. We construct an instance $I^*$ with a particular failure pattern, $\mathsf{DC}^*$ and a cut $\mX^*$ such that the cut-value $C(\mX^*)$ is exactly $B$. 

The instance $I^*$ is constructed as follows. First, we specify the failure pattern. In stage 0, choose any $r$ nodes in $\mathcal{N}$ and let them fail. For stage $s\in\mK$, choose any $r-x_s^*$ nodes in $\mR_s$ and any $x_s^*$ active nodes in $\mathcal{N}$ and let them fail right before stage $s+1$. We can always find such a failure pattern since there are $r$ nodes in $\mR_s$ for every $s$, and the accumulated number of failed nodes in $\mathcal{N}$ is
$$
r+\sum_{s\in\mK}x_s^*\leq r+k\leq n,
$$
where the first inequality follows from~\eqref{eq:sum_xs} and the second inequality follows from the assumption in the system model. Since the remaining active nodes in $\mathcal{N}$ is
$$
n-(r+\sum_{s\in\mK}x_s^*)=n-(r+k-x_0^*)\geq x_0^*,
$$
we can select any $x_0^*$ active nodes from $\mathcal{N}$ and denote them by $\mathcal{M}_0$. Denote the active nodes in $\mR_s$ by $\mathcal{M}_s$. 

Next, we specify the helpers for each repair round. The helpers for repair round $i$, for $i = 1, 2, \ldots, s$, are chosen first from $\mathcal{M}_0$, then from $\mathcal{M}_1$, and so on, until $d$ helpers are chosen. If $\sum_{i=0}^{s-1}|\mathcal{M}_i|<d$, the remaining helpers are chosen arbitrarily from the active nodes in $\mathcal{N}$. There are always enough active nodes in $\mathcal{N}$ to serve as helpers because
\begin{align}
&n-r-\sum_{i\in\mK, i<s }x^*_i\label{eq:valid1}\\
&= n-r-|\mathcal{M}_1|-\dots-|\mathcal{M}_{s-1}|\label{eq:valid2}\\
&\geq d-|\mathcal{M}_0|-\dots-|\mathcal{M}_{s-1}|,\label{eq:valid3}
\end{align}
where~\eqref{eq:valid1} is the number of active nodes in $\mathcal{N}$ after stage $s-1$, ~\eqref{eq:valid2} follows from the definition of $\mathcal{M}_s$, and~\eqref{eq:valid3} is the number of required helpers in $\mathcal{N}$. The inequality holds because $n-r\geq d$ according to the system model.


Finally, consider $\mathsf{DC}$, which comes after the
repair round $k$ and connects to $\mathcal{M}_0 \cup \mathcal{M}_1 \cup \dots \cup \mathcal{M}_k$.

The cut $\mX^*$ is constructed as follows: For $s\in \{0\}\cup \mT_1^*$, put $\mathsf{Out}_i$, for $i\in\mathcal{M}_s$, into $\overline{\mX^*}$, and all the remaining vertices in round~$s$ into $\mX^*$.
Vertices in these repair rounds contribute $x_0^*\alpha+\sum_{s\in \mT_{1}^*}x_s^*\alpha$ to the cut-value.
For $s\in \mT_2^*=\mK\setminus\mT_1^*$, put all vertices in round~$s$ into $\overline{\mX^*}$. Vertices in these repair rounds contribute
$$\sum_{s\in \mT_{2}^*}(d-x_0^*-\sum_{i\in \mK, i < s}x_i^*)\beta$$
to the cut-value.
Summing up the cut-value contribution of all the vertices, we get $B$, 
showing that the bound is tight.
\end{proof}



We give an example to demonstrate the above result. Consider the storage system WDSS$(8,3,4,2,\alpha,\beta,2)$, with an optimal solution to the minimization in Theorem~\ref{th:bound} as $\mT_1^*=\{1,3\}$ and $x^*_0=0$, $x^*_1=1$, $x^*_2=2$, $x^*_3 =0$, which implies that $\mT_2^*=\{2\}$ and $|\mathcal{M}_0| = x_0^* = 0,  |\mathcal{M}_1| = x_1^* = 1, |\mathcal{M}_2| = x_2^* = 2, |\mathcal{M}_3| = x_3^* = 0$. Accordingly, the lower bound is
\begin{align}
B &= x_1^* \alpha + (d-|\mathcal{M}_0|-|\mathcal{M}_1|) \beta \nonumber \\
&=\alpha+3\beta.\label{eq:achieve_exam}
\end{align}
An instance is indicated in Fig.~\ref{fig:define_example}, where nodes~5 and~6 failed in round~1 and nodes~8 and~10 failed in round 2.
Let $\mathcal{M}_0=\emptyset$, $\mathcal{M}_1=\{9\}$, and $\mathcal{M}_2=\{11,12\}$. Let $\mathsf{DC}$ be connected to $\{9,11,12\}$. It can be seen that the cut-value of Cut~2, indicated in Fig.~\ref{fig:define_example}, meets the lower bound $B$ in~\eqref{eq:achieve_exam}.

\begin{figure*}
 \centering
 \includegraphics[width=6in]{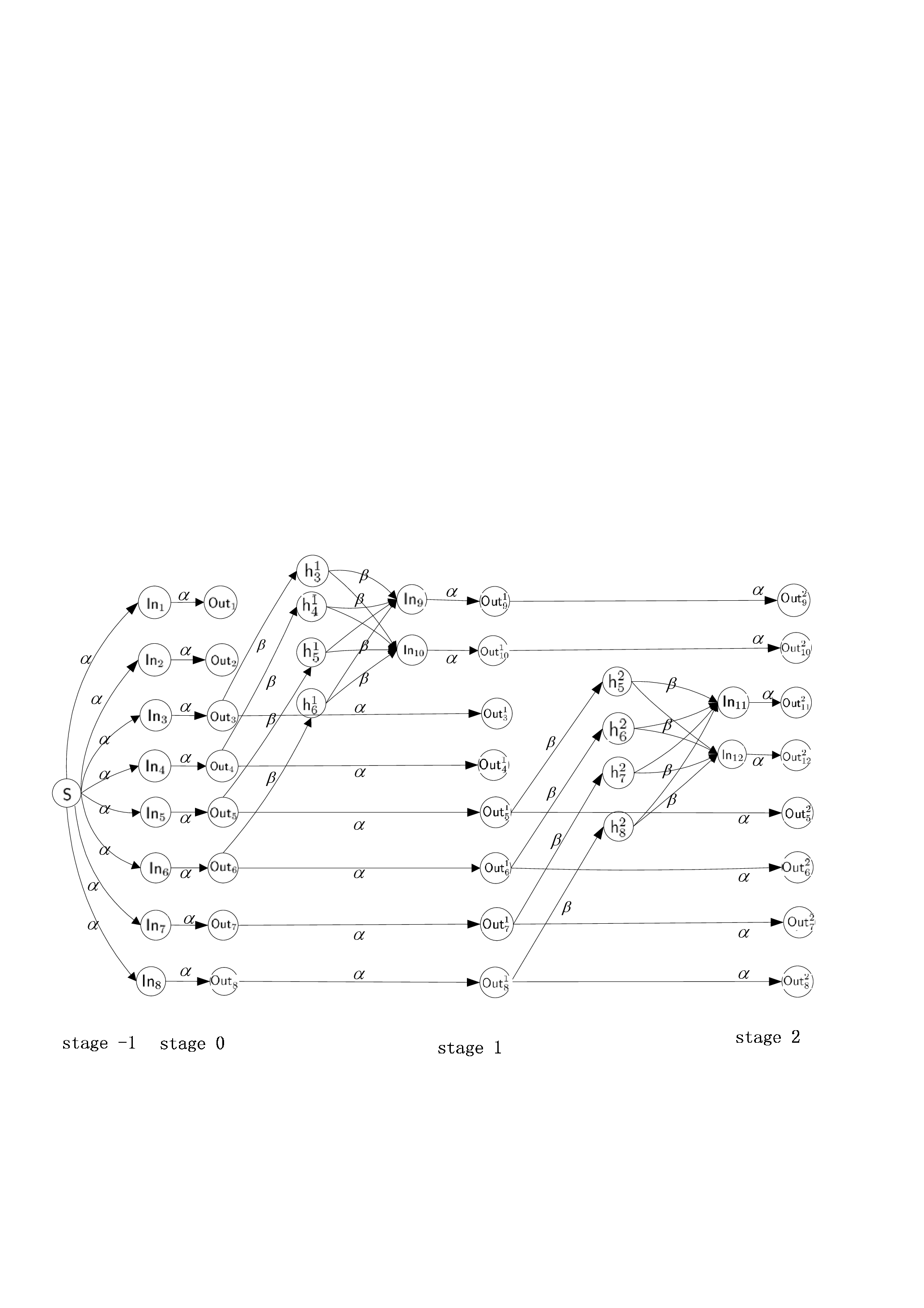}
 \caption{An example of a refined information flow graph. Each link with label $c$ represents $c$ parallel edges of unit capacity.}
 \label{fig:refined}
\end{figure*}

\section{Achievability of the Min-Cut Bound}\label{sec:inf_rounds}

In this section, we show that for a WDSS with infinite repair rounds, the min-cut bound can be achieved by linear network coding with a finite alphabet. In other words, we show that $C_{\text{storage}}=B$. To do this, we follow the idea of generic storage codes~\cite{itw15}, which is based on the concept of generic network coding~\cite{Li,Yeung_book}. In the following, we first introduce the refined information flow graph of the WDSS. Then we present the concept of generic storage codes.
Finally, the achievability is shown by the existence of a generic storage code for this particular network.

\subsection{Refined Information Flow Graph}

Given any information flow graph for a WDSS, we construct a
\emph{refined information flow graph} by introducing the concept
of {\em repair stage}. In regard to the refined information flow
graph, the repair process of $r$ nodes is called a repair stage. From
$\mathsf{S}$ to $\mathsf{In}_i$'s is called stage $-1$. The
original $n$ storage nodes are said to be in stage $0$. In stage
$s>0$, the out-vertex of each storage node, except the failed ones
in stage $s-1$, is connected to an auxiliary out-vertex by a
directed edge of capacity $\alpha$. This is a distinctive feature of
the refined graph, which differs from the original flow graph. In addition,
we add stage stamps for the out-vertices and auxiliary out-vertices. The out-vertex of node~$i$ in
stage $s$ is denoted by $\mathsf{Out}_i^{s}$. For any stage $s$, if node $i$ is a helper, then a new auxiliary h-vertex $\mathsf{h}_i^{s}$ is added to $\mathsf{Out}_i^s$ by an edge with capacity $\beta$.
Newcomers in this stage is connected to these new auxiliary h-vertices. This construction is essentially the same as the original flow graph.
For the out-vertices of the newcomers, we also need to add stage stamps, i.e., re-label the out-vertex of the newcomer $t$ as $\mathsf{Out}_t^{s}$.

After constructing the vertices and connecting them by edges as described above, we adjust the capacities of some edges in a way which does not affect the capacity of the network. Since each storage node has capacity $\alpha$, every edge from
$\mathsf{S}$ to $\mathsf{In}_i$ with infinite capacity is replaced
by an edge of capacity $\alpha$. Furthermore, since each auxiliary h-vertex's in-edge only has capacity $\beta$, we change all its out-edges' capacity from infinite to $\beta$. After these changes, all the edge capacities are finite. Finally, for each edge in the network, if its capacity is $c$, we replace it by $c$ parallel edges of unit capacity.

Note that the refined information flow graph represents a single-source multicast
acyclic network. An example with $n=8, d=4$, and $r=2$ is shown in
Figure~\ref{fig:refined}. In this example, nodes~$1$ and $2$ fail in
stage~0 and nodes $3$ and $4$ fail in stage~$1$.

\subsection{Generic Storage Codes}
Based on the refined information flow graph, we can introduce the generic storage codes of the WDSS. Before that, we first give some concepts of network coding~\cite{Yeung_book}, which are building blocks of generic storage codes.

Consider a single-source acyclic communication network and its
corresponding graph, $G=(V,E)$. Let the alphabet $\Sigma$ be the
finite field $GF(q)$. 
Suppose the message to be transmitted from the source node is an
$\omega$-dimensional column vector $\bx$ over $GF(q)$. We add
$\omega$ imaginary edges terminating at $\mathsf{S}$ and assign
them distinct vectors of the
$\omega$-dimensional standard basis. These vectors are referred to
as the global encoding kernels of the imaginary edges. For each
edge $e (i,j) \in E$, we iteratively define its global encoding
kernel by
$$
\bg_e\triangleq \sum_{d\in\text{I}(i)}l_{d,e} \bg_d,
$$
where $\text{I}(i)$ is the set of all incoming edges of $i$.
The transmitted symbol on edge $e$ is $\bx^T \bg_e$. An
$\omega$-dimensional linear network code consists of a scalar $l_{d,e}\in GF(q)$ for every adjacent pair of edges $<d,e>$ in the network
as well as a column $\omega$-vector $\bg_e$ for every edge $e$.

For an edge
set $P$, denote the set of the corresponding global encoding
kernels by
$$
\text{ker}(P) \triangleq \{\bg_e : e\in P\},
$$
and the linear span of $\text{ker}(P)$ by
$$
\text{vspace}(P) \triangleq \text{span}(\text{ker}(P)).
$$
For a vertex $i$, define
$$
\text{vspace}(i) \triangleq \text{span}(\text{ker}(\text{I}(i))).
$$


A sequence of edges $e_1, e_2, \ldots, e_n$ forms a {\em path} if
Head($e_i$) = Tail($e_{i+1}$) for $1\leq i \leq n-1$. Two paths
are {\em edge-disjoint} if they do not have any edge in common. A
set of edges is said to be {\em path-independent} if each edge in this
set is on a path originating from an imaginary edge and these
paths are edge-disjoint. An edge set $P$ is said to be {\em
regular} with respect to a linear network code if the global
encoding kernels in $\text{ker}(P)$ are linearly independent.

Since a $\mathsf{DC}$ can only connect to nodes in the same stage in the refined information flow graph, to make sure all $\mathsf{DC}$ can retrieve the file, it is sufficient to ensure all path-independent sets of  edges in the same stage are regular. A code that satisfies
this requirement can be regarded as a restricted form of a generic network code~\cite{Li,generic}. We call it a {\em generic storage code}. Denote the set of all the edges in stage $s$, except the incoming edges of data collectors, by $E_s$. Generic storage codes can then be formally defined below:

\begin{mydef}
  An $\omega$-dimensional linear network code on a refined information flow graph
  is said to be an {\em $\omega$-dimensional generic storage code} if every path-independent $\omega$-subsets of $E_s$ is regular,
  for any stage $s = 0, 1, 2, \ldots$.
\end{mydef}

\subsection{Achievablity}

Theorem~\ref{th:mincut} says that the min-cut value of the network is $B$. The question is whether this min-cut value, which is an upper bound of the storage capacity, is achievable. Note that the classical multicast network coding result does not apply, since the graph is infinite. In the following, we show that $B$ is achievable by the use of generic storage codes.
\begin{myth}
A file with size $\omega = B$ can be stored in a WDSS($n,k,d,r,\alpha,\beta,\infty$) by the use of an
$\omega$-dimensional generic storage code over GF($q$), where $q> {n\alpha+d\beta \choose \omega-1}$.
\label{main}
\end{myth}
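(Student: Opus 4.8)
The plan is to build a generic storage code explicitly by assigning a global encoding kernel to each edge of the refined information flow graph one at a time, adapting the inductive construction of generic network codes in~\cite{Li,Yeung_book,itw15}. I would process the edges in a topological order consistent with the stage indexing: first the edges leaving $\mathsf{S}$, then all edges of stage~$0$, then stage~$1$, and so on. At an edge $e$ with $i=\text{Tail}(e)$, the kernel must take the form $\bg_e=\sum_{d\in \text{I}(i)}l_{d,e}\,\bg_d$, so $\bg_e$ is forced to lie in $\text{vspace}(\text{Tail}(e))$ and the only freedom is in the local coefficients $l_{d,e}$. The invariant I maintain is exactly the defining property of a generic storage code: after every assignment, each fully processed path-independent $\omega$-subset of every $E_s$ is regular.

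Since the genericity requirement is confined to a single stage, assigning $\bg_e$ in stage $s$ can only violate the invariant through subsets $C\subseteq E_s$ with $|C|=\omega-1$ for which $C\cup\{e\}$ is path-independent; the new kernel must avoid $\text{span}(\text{ker}(C))$ for each such $C$. I would first argue, using the edge-disjoint path reaching $e$ together with the min-cut value $B=\omega$ from Theorem~\ref{th:mincut}, that $\text{vspace}(\text{Tail}(e))$ is never contained in a single $\text{span}(\text{ker}(C))$: that disjoint path supplies a kernel in $\text{vspace}(\text{Tail}(e))$ lying outside the at most $(\omega-1)$-dimensional span of $\text{ker}(C)$. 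Hence each intersection $\text{span}(\text{ker}(C))\cap\text{vspace}(\text{Tail}(e))$ is a proper subspace of $\text{vspace}(\text{Tail}(e))$ and contains at most $q^{m-1}$ vectors, where $m=\dim\text{vspace}(\text{Tail}(e))$. A union bound then shows a feasible $\bg_e$ exists as soon as the total number of forbidden vectors, at most $Nq^{m-1}$ with $N$ the number of candidate sets $C$, is strictly less than $q^{m}$, that is, as soon as $q>N$.

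The decisive point, and the main obstacle, is to make this field-size requirement uniform across the infinitely many stages, since the classical multicast result cannot be invoked on an unbounded graph. I would exploit the structural fact that every kernel appearing in stage $s$ is a linear combination of the kernels on the at most $n\alpha+d\beta$ unit-capacity edges crossing into stage~$s$, namely the $n\alpha$ storage edges of the active nodes and the $d\beta$ broadcast edges of the $d$ helpers. Consequently the distinct $(\omega-1)$-dimensional spans $\text{span}(\text{ker}(C))$ that can ever arise are indexed by $(\omega-1)$-subsets of these boundary edges, so $N\leq\binom{n\alpha+d\beta}{\omega-1}$, a bound independent of $s$. Choosing $q>\binom{n\alpha+d\beta}{\omega-1}$ therefore guarantees a valid $\bg_e$ at every edge of every stage, so the construction never stalls and yields a generic storage code on the whole infinite graph; since each data collector connects within a single stage and its min-cut equals $\omega=B$, every collector recovers the file, giving $C_{\text{storage}}=B$.
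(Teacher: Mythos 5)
Your overall strategy coincides with the paper's: a sequential, stage-indexed assignment of global encoding kernels maintaining the per-stage invariant that path-independent $\omega$-subsets of each $E_s$ are regular, a union bound over at most $\binom{n\alpha+d\beta}{\omega-1}$ forbidden subspaces whose count is uniform over stages, and decoding at each collector from the min-cut value $B$ (the paper packages this as Lemma~\ref{le:generic} and Algorithm~\ref{generic_code}). There is, however, a genuine gap at your decisive feasibility step. You assert that whenever $C\cup\{e\}$ is path-independent, $\text{vspace}(\text{Tail}(e))\not\subseteq\text{span}(\text{ker}(C))$, because the edge-disjoint path reaching $e$ ``supplies a kernel'' outside $\text{span}(\text{ker}(C))$. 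The kernel that path supplies is $\bg_d$, where $d$ is the edge immediately preceding $e$ on that path; to conclude $\bg_d\notin\text{span}(\text{ker}(C))$ you need $C\cup\{d\}$ to be regular, which you would deduce from your inductive invariant --- except that $d$ generally lies in the \emph{previous} stage (for instance, when $e$ is a broadcast edge into $\mathsf{h}_i^{s+1}$, the edge $d$ is an in-edge of $\mathsf{Out}_{h_i}^{s}$, hence in $E_s$), while $C\subseteq E_{s+1}$. Your invariant is deliberately confined to single stages, so it says nothing about the mixed-stage set $C\cup\{d\}$, and the argument stalls exactly where the infinite-graph difficulty lives. (The min-cut value $B=\omega$ from Theorem~\ref{th:mincut} is irrelevant to this step; it is needed only for decoding.) The paper's Algorithm~\ref{generic_code} avoids this by carrying the previous stage into the current pool --- $B_0$ is initialized with the in-edges of $\mathsf{Out}_i^{s}$ for all surviving $i$, and kernels are literally copied along $\mathsf{Out}_i^{s}\to\mathsf{Out}_i^{s+1}$ --- and by demanding avoidance only of those $\zeta$ with $\text{vspace}(\mathsf{Out}_{h_i}^{s})\not\subset\text{vspace}(\zeta)$, so feasibility reduces to pure counting and no cross-stage regularity claim is ever invoked.

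A second, related problem is your counting step. From ``every kernel in stage $s$ is a linear combination of the kernels on the $n\alpha+d\beta$ pool edges'' you infer that the distinct subspaces $\text{span}(\text{ker}(C))$ are indexed by $(\omega-1)$-subsets of that pool. That inference is invalid: the span of a set of linear combinations is in general not the span of any subset of the pool (for example, $C$ may contain newcomer storage edges, whose kernels are freshly chosen, or several outgoing edges of one auxiliary h-vertex if you assign those generically rather than by copying). What makes the bound $\binom{n\alpha+d\beta}{\omega-1}$ legitimate in the paper is a statement about the candidate \emph{sets}, not their spans: since data collectors attach only to out-vertices, it suffices to keep regular the path-independent subsets of the in-edges of out-vertices and h-vertices, kernels on all remaining edges are copies, and hence the sets $\zeta$ range exactly over subsets of $B_0$ with $|B_0|\le n\alpha+d\beta$. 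To close your proof you must either adopt this restriction-plus-copying structure --- at which point you have reproduced Algorithm~\ref{generic_code} --- or let the candidate sets range over all of $E_s$, whose number $\binom{n\alpha+(1+r)d\beta}{\omega-1}$ exceeds the field size claimed in the theorem.
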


\begin{proof}

If $q> {n\alpha+d\beta \choose \omega-1}$, we can construct an $\omega$-dimensional generic storage code over GF($q$) on a refined information flow graph of a WDSS. The existence of such a code is shown in Lemma~\ref{le:generic} in Appendix~\ref{app:constrcut code}. Now we show that any data collector can retrieve the file based on the code.
Recall that any data collector is connected to $k$ out-vertices in the same stage.
By Theorem~\ref{th:mincut}, the value of a cut is at least $B$, so there are at least $B$ disjoint paths terminating at any $k$ out-vertices, in every stage $s\geq 0$. Thus there are at least one path-independent set, say $P$, with size $B$ within the incoming edges of these $k$ out-vertices. By the definition of generic storage codes,
the dimension of ker($P$) is $B$, and the file with size $B$ can be decoded.
\end{proof}

The generic storage code can be constructed by Algorithm 1 in  Appendix~\ref{app:constrcut code}. In each stage, there are at most $n\alpha+d\beta \choose \omega-1$ subsets to be considered for assigning a global encoding kernel for each edge.
Since the total number of edges to be processed in each stage is $d\beta+r\alpha$, for a given value of $\omega$, the complexity of the algorithm is polynomial time in $n\alpha+d\beta$ in each stage.

\section{Comparison with Cooperative Repair}\label{sec:compare}

We compare broadcast repair with cooperative repair, assuming both repair processes are triggered after the number of failed storage nodes accumulates to $r$. To simplify the analysis, in the following, we restrict $r$ to be a divisor of $k$, i.e.,
\begin{align}
ru=k, \label{eq:u_def}
\end{align}
for some positive integer $u$. The storage capacity of a WDSS can then be expressed as follows:
\begin{myth}\label{thm:capacity_sol_form}
If~\eqref{eq:u_def} holds, the storage capacity of a WDSS is
  \begin{equation}
    C_{\text{storage}}=\sum_{j=1}^{u}\min\left\{r\alpha,(d-(j-1)r)\beta\right\}.\label{eq:capacity_sol_form}
  \end{equation}
\end{myth}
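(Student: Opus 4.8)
The plan is to combine the achievability result of Theorem~\ref{main} with the min-cut characterization of Theorem~\ref{th:mincut}: together they give $C_{\text{storage}} = B$, so it suffices to evaluate the combinatorial minimization~\eqref{eq:bound} in closed form under the hypothesis $ru = k$. I would prove $B = \sum_{j=1}^{u}\min\{r\alpha,(d-(j-1)r)\beta\}$ by establishing matching upper and lower bounds on $B$.

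For the upper bound I would simply exhibit a feasible point of the program in Theorem~\ref{th:bound}. Set $x_0 = 0$, $x_s = r$ for $s = 1,\dots,u$, and $x_s = 0$ for $u < s \le k$, which satisfies~\eqref{eq:x0}--\eqref{eq:sum_xs} since $\sum_{s} x_s = ur = k$. For each $j \in \{1,\dots,u\}$ place round $j$ in $\mT_1$ when $r\alpha \le (d-(j-1)r)\beta$ and in $\mT_2$ otherwise. Because every round preceding round $j$ is full, the cumulative mass $\sum_{i=0}^{j-1} x_i$ equals $(j-1)r$, so round $j$ contributes exactly $\min\{r\alpha,(d-(j-1)r)\beta\}$ to the objective. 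Summing over $j$ shows $B \le \sum_{j=1}^{u}\min\{r\alpha,(d-(j-1)r)\beta\}$.

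The lower bound is the crux, and I would reach it by reducing to a convenient \emph{block} form of the minimizer. Fix the optimal label set $\mT_1$; for this fixed $\mT_1$ the objective~\eqref{eq:bound} is linear in $\bx$ over the polytope cut out by~\eqref{eq:x0}--\eqref{eq:sum_xs}, so the minimum is attained at a vertex, at which all but one coordinate sit at a bound. Since the system model forces $n \ge r+d \ge r+k > k$, the upper bound $x_0 \le n$ in~\eqref{eq:x0} is slack (the equality~\eqref{eq:sum_xs} already caps $x_0$ at $k$); the coordinates at a bound therefore take values in $\{0,r\}$ for $s \in \mK$ and $0$ for $x_0$, and because they sum with the single free coordinate to $k = ur$, that free coordinate is itself a multiple of $r$ lying in its range. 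Hence every $x_s$ with $s \in \mK$ is $0$ or $r$ and $x_0$ is a multiple of $r$, so an optimal solution consists of exactly $u$ full blocks of mass $r$, each occupying a distinct round (round~$0$ may carry several, all of storage type). Ordering these blocks by round index, the $j$-th block is preceded by exactly $(j-1)r$ units of mass, so it contributes $r\alpha$ if it is a storage block (in $\{0\}\cup\mT_1$) and $(d-(j-1)r)\beta$ if it is a bandwidth block (in $\mT_2$); either way its contribution is at least $\min\{r\alpha,(d-(j-1)r)\beta\}$. Summing over the $u$ blocks yields $B \ge \sum_{j=1}^{u}\min\{r\alpha,(d-(j-1)r)\beta\}$, matching the upper bound.

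The main obstacle is precisely the coupling in~\eqref{eq:bound}: the bandwidth coefficient $d-\sum_{i=0}^{s-1}x_i$ of a round depends on the masses of all earlier rounds, so the objective does not decouple round-by-round and a naive per-round bound would not be tight. The vertex/block-structure observation is what resolves this, since it forces every block's predecessor mass to be an exact multiple of $r$ and thereby separates the objective into $u$ independent per-block minima. With both bounds in hand, $B = \sum_{j=1}^{u}\min\{r\alpha,(d-(j-1)r)\beta\}$, and therefore $C_{\text{storage}}$ equals the right-hand side of~\eqref{eq:capacity_sol_form}.
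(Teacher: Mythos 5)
Your proof is correct and reaches the formula by a genuinely different route from the paper's. Both arguments reduce, via $C_{\text{storage}}=B$, to evaluating the program of Theorem~\ref{th:bound}, and both hinge on showing that an optimal $\bx$ has components in $\{0,r\}$; the difference is how that structure is obtained and then exploited. The paper proves it in Lemma~\ref{th:bound_sol} by a pairwise exchange argument (shifting mass $\delta$ between two components lying strictly between $0$ and $r$), and then collapses the problem to a one-dimensional minimization over $t_1=|\mT_1^*|$ by asserting that an optimal $\bx^*$ can be put in a form where all storage-type blocks precede all bandwidth-type blocks. You instead fix the optimal $\mT_1$, note that $c_{\mT_1}$ is affine in $\bx$, and apply the LP extreme-point principle to the box-plus-one-equality polytope, which together with $ru=k$ forces $x_s\in\{0,r\}$ for $s\in\mK$ and $x_0\in\{0,r,2r,\dots\}$. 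Your route buys two things: it handles $x_0$ cleanly (the paper's exchange never touches a component larger than $r$, so its claim that all components are $0$ or $r$ is imprecise for $x_0$), and your per-block accounting --- the $j$-th block in round order contributes at least $\min\{r\alpha,(d-(j-1)r)\beta\}$ whatever its type --- replaces the paper's unproven reordering step by a term-by-term inequality. What the paper's route buys is elementarity: the exchange argument needs no polyhedral theory and explicitly transforms any optimum into the structured form. Two details you should state explicitly to close your argument: in the upper bound, rounds $u+1,\dots,k$ must be assigned to $\mT_1$ (carrying zero mass they then contribute nothing, whereas in $\mT_2$ each would add $(d-k)\beta\geq 0$); and in the lower bound, the $\mT_2$ rounds carrying no block still contribute $(d-\sum_{i<s}x_i)\beta\geq(d-k)\beta\geq 0$, so dropping them from your sum over blocks is valid precisely because the model assumes $d\geq k$.
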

\begin{proof}
Define
\begin{equation} \label{eq:c_mT}
c_{\mT_1}(\bx) \triangleq x_0\alpha+\sum_{s\in\mT_1}x_s\alpha+\sum_{s\in\mT_2}(d-\sum_{i=0}^{s-1}x_i)\beta,
\end{equation}
where $\bx=(x_0,x_1,\dots,x_k)$ satisfying~\eqref{eq:x0} to~\eqref{eq:sum_xs}.
Based on Lemma~\ref{th:bound_sol} shown in Appendix~\ref{app:no positive}, there is an optimal solution vector $\bx^*=(x_0^*,x_1^*,\dots,x^*_{k})$ for $\min c_{\mT_1}$, whose components are either $x^*_i=r$ or $x^*_i=0$. Because of~\eqref{eq:sum_xs} and~\eqref{eq:u_def}, there are $u$ components that have the value $r$.

Recall that $C_{\text{storage}} = \min_{\mT_1} \min_{\bx} c_{\mT_1}(\bx)$, where $\mT_1 \subseteq \mathcal{K}$ and $\bx$ is subject to~\eqref{eq:x0} to~\eqref{eq:sum_xs}. Let $\mT_1^*$ be an optimal solution and denote $|\mT_1^*|$ by $t_1$. It can be seen from~\eqref{eq:c_mT} that there exists an optimal $\bx^*$ in the form of
$$
(\underbrace{r, r, \ldots, r}_{t_1}, \underbrace{r, r, \ldots, r}_{u-t_1}, \underbrace{0, 0, \ldots, 0}_{k-u}).
$$
Hence,
$$
C_{\text{storage}} = \min_{t_1\in \{1, \ldots, u\}} t_1 r\alpha + \sum_{j=t_1+1}^u (d- (j-1)r) \beta,
$$
which can be re-written as~\eqref{eq:capacity_sol_form}.
\end{proof}


Now we compare broadcast repair with cooperative repair. In cooperative repair, the newcomers receive packets from helper
nodes through individual channels, and then exchange the encoded
packets to all the other newcomers.

Consider the two points, \emph{minimum storage} (MS) point, which corresponds to the best storage efficiency, and the \emph{minimum repair-transmission bandwidth} (MT) point, which corresponds to the minimum repair-transmission bandwidth on the tradeoff curve between repair-transmission bandwidth and storage (see Fig.~\ref{fig:compare} for example). In cooperative repair, the repair-transmission bandwidth is equal to the repair bandwidth. According to~\cite{kenneth_jnl}, the MS point and the MT point for cooperative repair are
$$
(\tau_{\text{MSC}},\alpha_{\text{MSC}})=(\frac{d+r-1}{k(d+r-k)},\frac{1}{k}),
$$
and
$$
(\tau_{\text{MTC}},\alpha_{\text{MTC}})=\frac{2d+r-1}{k(2d+r-k)}(1,1),
$$
respectively. Note that the values in the above expressions are normalized so that the file size (or equivalently, the storage capacity) is normalized to 1.

Next we consider broadcast repair.
To derive the MS point and MT point for broadcast repair,
for the purpose of normalization, $C_{\text{storage}}$ is assumed to be 1. At the MS point, $\alpha$ is equal to $1/k$ so that $k$ nodes can recover the file. When $\alpha=1/k$, every term of~\eqref{eq:capacity_sol_form} should be $r\alpha$ such that the sum of $u$ terms should equal 1.
It is required that
 $$
 (d-(u-1)r)\beta \geq r\alpha.
 $$
Then we can obtain the MS point for broadcast repair as follows:
$$
(\tau_{\text{MSB}},\alpha_{\text{MSB}})=\big(\frac{d}{k(d+r-k)},\frac{1}{k}\big).
$$
At the MT point, the total number of transmitted symbols is equal to the total number of stored symbols in the newcomers. Therefore, we have $\alpha = \tau$. Hence, $r\alpha=d \beta\geq (d-(j-1)r)\beta$ for $j=1,2,\dots,u$. According to~\eqref{eq:capacity_sol_form}, we have
\begin{align*}
\sum_{j=1}^u (d-(j-1)r)\beta &=1 \\
u \big[ d - \frac{r(u-1)}{2} \big] \beta &= 1,
\end{align*}
from which we can obtain the MT point for broadcast repair as follows:
$$
(\tau_{\text{MTB}},\alpha_{\text{MTB}})=\frac{2d}{k(2d+r-k)}(1,1).
$$

It is easy to see that broadcast repair outperforms cooperative
repair in the two points for any $r > 1$.
In Fig.~\ref{fig:compare}, we plot the tradeoff curves of the two repair schemes with
parameters $C_{\text{storage}}=1, d=9$, $k=4$ and $r=2$. We have $$
(\tau_{\text{MSB}},\alpha_{\text{MSB}})=\big(0.321,0.25\big),
$$
and
$$
(\tau_{\text{MTB}},\alpha_{\text{MTB}})=(0.281,0.281).
$$
As a benchmark, we also plot
the single-node repair, in which the repair is triggered whenever there is
a single node failure.
As reported in~\cite{kenneth_jnl}, cooperative repair performs
better than single-node repair due to the benefit of node cooperation.
On the other hand, when applied to WDSS, it performs worse than broadcast repair, since it does not exploit
the broadcast nature of the wireless medium.

\begin{figure}
 \centering
 \includegraphics[width=3.1in,angle=0]{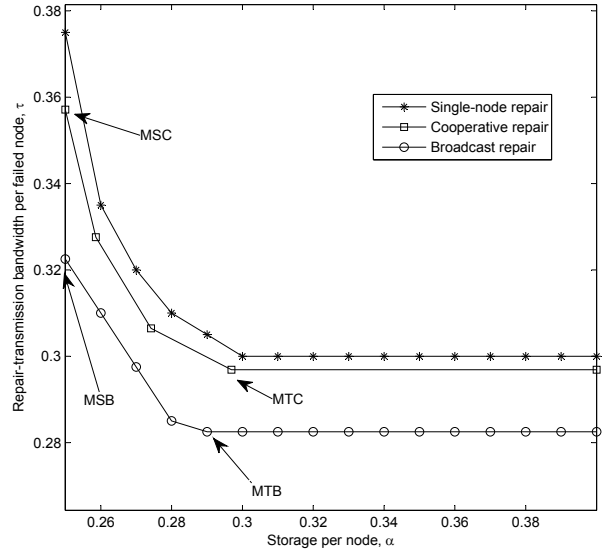}
 \caption{Tradeoff curve between repair-transmission bandwidth and storage, $C_{\text{storage}}=1,d=9,k=4,r=2$.}\label{fig:compare}
\end{figure}

%

\section{Conclusions}\label{sec:conclu}

In this work, we show that by exploiting the broadcast nature of the wireless channel,
the performance of a WDSS can be improved. Based on the graph representation of the WDSS, we derive the
storage capacity of the WDSS. The fundamental tradeoff between the repair-transmission bandwidth and storage amount shows that
broadcast repair for $r$ nodes outperforms the naive way of repairing these $r$ nodes one by one. We also compare broadcast repair with
cooperative repair, both of which are specifically designed for repairing multiple nodes. While cooperative repair works very well
in wired DSS, it is outweighed by broadcast repair in wireless environments.

We have shown that the optimal tradeoff under functional repair can be achieved by generic storage codes. Except for some special cases, it remains unknown whether exact-repair codes exist on the whole curve. It is theoretically challenging and practically important to construct such codes. We hope that our work can stimulate more studies on this interesting topic.

In this work, our model assumes that all the wireless storage nodes are within radio coverage of one another. In reality, the network topology may not be fully connected, and even if it is fully connected, the channel gains of different links can be different. Our model, however, provides a baseline study for performance evaluation of systems with other network topologies and more sophisticated physical-layer techniques of using the broadcast channel. Finally, we remark that our model is not limited only to WDSS. As it is equivalent to the centralized repair model considered in~\cite{centralized_repair, centralized_repair_j}, our results can be directly applied to a DSS which performs repairs at a central location. This, for example, includes the rack-based architecture, under which when there is a rack failure, all the nodes within the rack need to be replaced and centralized repair can be performed at a leader node in the rack.

\bibliographystyle{IEEEtran}

\bibliography{zigzag}

\begin{thebibliography}{10}
\providecommand{\url}[1]{#1}
\csname url@samestyle\endcsname
\providecommand{\newblock}{\relax}
\providecommand{\bibinfo}[2]{#2}
\providecommand{\BIBentrySTDinterwordspacing}{\spaceskip=0pt\relax}
\providecommand{\BIBentryALTinterwordstretchfactor}{4}
\providecommand{\BIBentryALTinterwordspacing}{\spaceskip=\fontdimen2\font plus
\BIBentryALTinterwordstretchfactor\fontdimen3\font minus
  \fontdimen4\font\relax}
\providecommand{\BIBforeignlanguage}[2]{{%
\expandafter\ifx\csname l@#1\endcsname\relax
\typeout{** WARNING: IEEEtran.bst: No hyphenation pattern has been}%
\typeout{** loaded for the language `#1'. Using the pattern for}%
\typeout{** the default language instead.}%
\else
\language=\csname l@#1\endcsname
\fi
#2}}
\providecommand{\BIBdecl}{\relax}
\BIBdecl

\bibitem{myicics15}
P.~Hu, C.~W. Sung, and T.~H. Chan, ``Broadcast repair for wireless distributed
  storage systems,'' in \emph{Proc. International Conference on Information,
  Communications and Signal Processing (ICICS)}, Singapore, Dec. 2015.

\bibitem{googlefile}
S.~Ghemawat, H.~Gobioff, and S.-T. Leung, ``The {G}oogle file system,'' in
  \emph{Proceedings of the 19th ACM Symposium on Operating Systems Principles},
  New York, USA, Oct. 2003, pp. 29--43.

\bibitem{compare}
H.~Weatherspoon and J.~D. Kubiatowicz, ``Erasure coding vs. replication: A
  quantitative comparison,'' in \emph{In Proceedings of the First International
  Workshop on Peer-to-Peer Systems}, 2002, pp. 328--338.

\bibitem{oceanstore}
J.~Kubiatowicz, D.~Bindel, Y.~Chen, S.~Czerwinski, P.~Eaton, D.~Geels,
  R.~Gummadi, S.~Rhea, H.~Weatherspoon, W.~Weimer, C.~Wells, and B.~Zhao,
  ``Ocean{S}tore: an architecture for global-scale persistent storage,'' in
  \emph{Proc. 9th Int. Conf. on Architectural Support for Programming Languages
  and Operating Systems (ASPLOS)}, Cambridge, MA, Nov. 2000, pp. 190--201.

\bibitem{totalrecall}
R.~Bhagwan, K.~Tati, Y.~Cheng, S.~Savage, and G.~M. Voelker, ``Total{R}ecall:
  system support for automated availability management,'' in \emph{Proc. of the
  1st Conf. on Networked Systems Design and Implementation (NSDI)}, San
  Francisco, Mar. 2004.

\bibitem{Dimakis}
A.~G. Dimakis, P.~B. Godfrey, Y.~Wu, M.~J. Wainwright, and K.~Ramchandran,
  ``Network coding for distributed storage systems,'' \emph{{IEEE} Trans. Inf.
  Theory}, vol.~56, no.~9, pp. 4539--4551, Sep. 2010.

\bibitem{networkcoding}
R.~Ahlswede, N.~Cai, S.-Y.~R. Li, and R.~W. Yeung, ``Network information
  flow,'' \emph{{IEEE} Trans. Inf. Theory}, vol.~46, no.~4, pp. 1204--1216,
  Jul. 2000.

\bibitem{wuyunnan_jsac}
Y.~Wu, ``Existence and construction of capacity-achieving network codes for
  distributed storage,'' \emph{{IEEE} J. on Selected Areas in Commun.},
  vol.~28, no.~2, pp. 277--288, Feb. 2010.

\bibitem{itw15}
P.~Hu, K.~W. Shum, and C.~W. Sung, ``The fundamental theorem of distributed
  storage systems revisited,'' in \emph{Proc. IEEE ITW}, Hobart, Australia,
  Nov. 2014, pp. 65--69.

\bibitem{exact_isit10}
C.~Suh and K.~Ramchandran, ``Exact-repair {MDS} codes for distributed storage
  using interference alignment,'' in \emph{Proc. {IEEE} Int. Symp. Inf.
  Theory}, Austin, USA, Jun. 2010, pp. 161--165.

\bibitem{exact_repair}
K.~V. Rashmi, N.~B. Shah, and P.~V. Kumar, ``Optimal exact-regenerating codes
  for distributed storage at the {MSR} and {MBR} points via a product-matrix
  construction,'' \emph{{IEEE} Trans. Inf. Theory}, vol.~57, no.~8, pp.
  5227--5239, Aug. 2011.

\bibitem{exact_erasure}
K.~V. Rashmi, N.~B. Shah, K.~Ramchandran, and P.~V. Kumar, ``Regenerating codes
  for errors and erasures in distributed storage,'' in \emph{Proc. {IEEE} Int.
  Symp. Inf. Theory}, Cambridge, USA, Jul. 2012, pp. 1202--1206.

\bibitem{exact_repair12}
N.~B. Shah, K.~V. Rashmi, P.~V. Kumar, and K.~Ramchandran, ``Distributed
  storage codes with repair-by-transfer and nonachievablility of interior
  points on the storage-bandwidth traodeoff,'' \emph{{IEEE} Trans. Inf.
  Theory}, vol.~58, no.~3, pp. 1837--1852, Mar. 2012.

\bibitem{uncoded}
S.~{El Rouayheb} and K.~Ramchandran, ``Fractional repetion codes for repair in
  distributed storage systems,'' in \emph{Proc. Allerton conference on commun.
  control and computing}, Monticello, Sep. 2010, pp. 1510--1517.

\bibitem{Cooperative_hu}
Y.~Hu, Y.~Xu, X.~Wang, C.~Zhan, and P.~Li, ``Cooperative recovery of
  distributed storage systems from multiple losses with network coding,''
  \emph{{IEEE} J. on Selected Areas in Commun.}, vol.~28, no.~2, pp. 268--276,
  Feb. 2010.

\bibitem{cooperative_MFR}
X.~Wang, Y.~Xu, Y.~Hu, and K.~Ou, ``{MFR}: Multi-loss flexible recovery in
  distributed storage systems,'' in \emph{Proc. IEEE ICC}, Cape Town, South
  Africa, May 2010, pp. 1--5.

\bibitem{cooperative_function_code}
A.~M. Kermarrec, N.~{Le Scouarnec}, and G.~Straub, ``Repairing multiple
  failures with coordinated and adaptive regenerating codes,'' in
  \emph{International Symposium on Network Coding}, Beijing, China, Jul. 2011,
  pp. 88--93.

\bibitem{kenneth_jnl}
K.~W. Shum and Y.~Hu, ``Cooperative regenrerating codes,'' \emph{{IEEE} Trans.
  Inf. Theory}, vol.~59, no.~11, pp. 7229--7258, Nov. 2013.

\bibitem{cooperative_exact_code}
A.~Wang and Z.~Zhang, ``Exact cooperative regenerating codes with
  minimum-repair-bandwidth for distributed storage,'' in \emph{Proceedings IEEE
  INFOCOM}, Turin, Italy, Apr. 2013, pp. 400--404.

\bibitem{access_sensor}
A.~G. Dimakis, V.~Prabhakaran, and K.~Ramchandran, ``Decentralized erasure
  codes for distributed networked storage,'' \emph{{IEEE} Trans. Inf. Theory},
  vol.~52, no.~6, pp. 2809--2816, Jun. 2006.

\bibitem{wireless_network_coding}
A.~G. Dimakis and K.~Ramchandran, ``Network coding for distributed storage in
  wireless networks,'' in \emph{Networked Sensing Information and
  Control}.\hskip 1em plus 0.5em minus 0.4em\relax Springer US, Apr. 2008, pp.
  115--134.

\bibitem{wireless_awgn}
N.~Wang and J.~Lin, ``Joint channel-network coding ({JCNC}) for distributed
  storage in wireless network,'' \emph{Lecture Notes of the Institute for
  Computer Sciences, Social Informatics and Telecommunications Engineering},
  vol.~4, pp. 291--301, 2009.

\bibitem{wireless_download}
C.~Gong and X.~Wang, ``On partial downloading for wireless distributed storage
  networks,'' \emph{IEEE Trans. on Signal Processing}, vol.~60, no.~6, pp.
  3278--3288, Jun. 2012.

\bibitem{Energy_jnl}
X.~Yang, X.~Tao, S.~Member, E.~Dutkiewicz, X.~Huang, Y.~J. Guo, and Q.~Cui,
  ``Energy-efficient distributed data storage for wireless sensor networks
  based on compressed sensing and network coding,'' \emph{IEEE Trans. on
  Wireless Communications}, vol.~12, no.~10, pp. 5087--5099, Oct. 2013.

\bibitem{wireless_backup}
J.~Tian, T.~Yan, and G.~Wang, ``A network coding based energy efficient data
  backup in survivability-heterogeneous sensor networks,'' \emph{IEEE Trans. on
  Mobile Computing}, vol.~14, no.~10, pp. 1992--2006, Oct. 2015.

\bibitem{femtocaching}
K.~Shanmugam, N.~Glorezaei, A.~G. Dimakis, A.~F. Molisch, and G.~Caire,
  ``{FemtoCaching}: Wireless content delivery through distributed caching
  helpers,'' \emph{{IEEE} Trans. Inf. Theory}, vol.~59, no.~12, pp. 8402--8413,
  Dec. 2013.

\bibitem{d2d_2014}
N.~Glorezaei, P.~Mansourifard, A.~F. Molisch, and A.~G. Dimakis, ``Base-station
  assisted device-to-device communications for high-throughput wireless video
  networks,'' \emph{{IEEE} Trans. Wireless Commun.}, vol.~13, no.~7, pp.
  3665--3676, Jul. 2014.

\bibitem{wireless_d2d}
M.~Ji, G.~Caire, and A.~F. Molisch, ``Wireless device-to-device caching
  networks: Basic principles and system performance,'' \emph{{IEEE} J. on
  Selected Areas in Commun.}, vol.~34, no.~1, pp. 176--189, Jul. 2015.

\bibitem{d2d_2016}
J.~Pedersen, A.~G. i.~Amat, I.~Andriyanova, and F.~Brannstrom, ``Distributed
  storage in mobile wireless networks with device-to-device communication,''
  \emph{{IEEE} Trans. Commun.}, vol.~64, no.~11, pp. 4862--4878, Nov. 2016.

\bibitem{two_layer}
Y.~Shen, X.~Dang, M.~Shu, N.~Xi, and J.~Ma, ``Two-layer storage scheme and
  repair method of failure data in wireless sensor networks,''
  \emph{International Journal of Distributed Sensor Networks}, Sep. 2012.

\bibitem{repair_enegry}
L.~Wang, Y.~Yang, W.~Zhao, and W.~Lu, ``Network coding for energy-efficient
  distributed storage system in wireless sensor networks,''
  \emph{arXiv:1304.1705}, Apr. 2013.

\bibitem{repair_space_time}
C.~Hollanti, D.~Karpuk, A.~Barreal, and H.~F. Lu, ``Space-time storage codes
  for wireless distributed storage systems,'' in \emph{International Conference
  on Wireless VITAE}, Aalborg, Denmark, May 2014, pp. 1202--1206.

\bibitem{xiaoming}
M.~Gerami and M.~Xiao, ``Repair for distributed storage systems with erasure
  channels,'' in \emph{Proc. IEEE ICC}, Budapest, Hungary, Jun. 2013, pp.
  4058--4062.

\bibitem{xiaoming14}
M.~Gerami, M.~Xiao, J.~Li, C.~Fischione, and Z.~Lin, ``Repair for distributed
  storage systems with packet erasure channels and dedicated nodes for
  repair,'' \emph{IEEE Transactions on Communications}, vol.~64, no.~4, pp.
  1367--1383, Apr. 2016.

\bibitem{xiaoming_broadcast}
M.~Gerami, M.~Xiao, and M.~Skoglund, ``Partial repair for wireless caching
  networks with broadcast channels,'' \emph{IEEE Wireless Communications
  Letters}, vol.~4, no.~2, pp. 145--148, Apr. 2015.

\bibitem{mingjun2017}
M.~Dai, C.~W. Sung, H.~Wang, X.~Gong, and Z.~Lu, ``A new zigzag-decodable code
  with efficient repair in wireless distributed storage,'' \emph{IEEE Trans. on
  Mobile Computing}, vol.~16, no.~5, pp. 1218--1230, May 2017.

\bibitem{centralized_repair}
A.~S. Rawat, O.~O. Koyluoglu, and S.~Vishwanath, ``Centralized repair of
  multiple node failures,'' in \emph{Proc. {IEEE} Int. Symp. Inf. Theory},
  Bacelona, Spain, Jul. 2016, pp. 1003--1007.

\bibitem{centralized_repair_j}
------, ``Centralized repair of multiple node failures with applications to
  communication efficient secret sharing,''
  \emph{https://arxiv.org/abs/1603.04822}, Mar. 2016.

\bibitem{highrate_MDS}
M.~Ye and A.~Barg, ``Explicit constructions of high-rate mds array codes with
  optimal repair bandwidth,'' \emph{{IEEE} Trans. Inf. Theory}, vol.~63, no.~4,
  pp. 2001--2014, Apr. 2017.

\bibitem{random_linear_network}
T.~Ho, M.~M{\'e}dard, R.~Koetter, D.~R. Karger, M.~Effros, J.~Shi, and
  B.~Leong, ``A random linear network coding approach to multicast,''
  \emph{{IEEE} Trans. Inf. Theory}, vol.~52, no.~10, pp. 4413--4430, Oct. 2006.

\bibitem{Li}
S.-Y.~R. Li, R.~W. Yeung, and N.~Cai, ``Linear network coding,'' \emph{{IEEE}
  Trans. Inf. Theory}, vol.~49, no.~2, pp. 371--381, Feb. 2003.

\bibitem{Yeung_book}
R.~W. Yeung, \emph{Information Theory and Network Coding}.\hskip 1em plus 0.5em
  minus 0.4em\relax Springer, 2008.

\bibitem{generic}
M.~Tan, R.~W. Yeung, S.-T. Ho, and N.~Cai, ``A unified framework for linear
  network coding,'' \emph{{IEEE} Trans. Inf. Theory}, vol.~57, no.~1, pp.
  416--423, Jan. 2011.

\end{thebibliography}

\appendices
\section{}\label{app:constrcut code}
\begin{myle} \label{le:generic} Let $G$ be a refined
information flow graph of a WDSS such that there is at least one path-independent set of edges of size $\omega$ in each stage. An $\omega$-dimensional generic storage code on $G$ over GF($q$) can be constructed, provided that $q>{n\alpha+d\beta \choose \omega-1}$.
\end{myle}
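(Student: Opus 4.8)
The plan is to construct the code greedily, assigning a global encoding kernel $\bg_e\in GF(q)^\omega$ to one edge at a time in a topological (ancestral) order of the acyclic graph $G$, while maintaining the invariant that after any prefix of edges has been processed, every path-independent subset of already-processed edges lying in a common stage $E_s$ has size at most $\omega$ and is regular, i.e.\ its kernels are linearly independent. First I would initialize by attaching $\omega$ imaginary edges at $\mathsf{S}$ and assigning them the standard basis of $GF(q)^\omega$; these are trivially regular. Since $G$ is a DAG a topological order exists, and when an edge $e$ is processed all edges incoming to $\text{Tail}(e)$ have already received kernels, so $\text{vspace}(\text{Tail}(e))$ is well defined and the choice of $\bg_e$ is constrained to lie in it. This is exactly the iterative assignment carried out by Algorithm~1.

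The second step is the per-edge choice together with the counting bound. When processing $e$, the only way the invariant can break is that $e$ completes a path-independent set $P\subseteq E_s$ (the stage containing $e$) whose remaining members $P\setminus\{e\}$ are already regular but whose kernels together with $\bg_e$ become dependent; this happens precisely when $\bg_e$ falls into the forbidden subspace $W\triangleq\text{span}(\text{ker}(P\setminus\{e\}))$, of dimension at most $\omega-1$. It suffices to avoid the \emph{maximal} such sets, for which $\dim W=\omega-1$, since any smaller regular path-independent set can be extended to one of size $\omega-1$ within the same stage (the hypothesis that each stage contains a path-independent set of size $\omega$ supplies the required edge-disjoint paths). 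Each maximal $W$ is determined by a choice of $\omega-1$ edges that can co-occur with $e$ in a common-stage path-independent set, and the number of unit-capacity edges carrying information within a stage is at most $n\alpha+d\beta$; hence the number of forbidden subspaces is at most $\binom{n\alpha+d\beta}{\omega-1}$.

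The third step is feasibility of the choice, and this is where I expect the real work. The key sub-claim is that $\text{vspace}(\text{Tail}(e))\not\subseteq W$ for each forbidden $W$, so that $W\cap\text{vspace}(\text{Tail}(e))$ is a \emph{proper} subspace of $\text{vspace}(\text{Tail}(e))$ and therefore contains at most $q^{\dim\text{vspace}(\text{Tail}(e))-1}$ vectors. Granting this, the union of all forbidden intersections covers at most $\binom{n\alpha+d\beta}{\omega-1}\,q^{\dim\text{vspace}(\text{Tail}(e))-1}$ vectors, which is strictly fewer than the $q^{\dim\text{vspace}(\text{Tail}(e))}$ vectors of $\text{vspace}(\text{Tail}(e))$ exactly when $q>\binom{n\alpha+d\beta}{\omega-1}$; a valid $\bg_e$ then exists and the invariant is preserved. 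The main obstacle is establishing the sub-claim: I would argue via edge-disjoint paths (a Menger/max-flow argument), showing that because $e$ together with $P\setminus\{e\}$ lies on $|P|$ edge-disjoint paths originating at the imaginary edges, there is a path reaching $\text{Tail}(e)$ that is edge-disjoint from the paths defining $W$, which forces $\text{vspace}(\text{Tail}(e))$ to contain a vector outside $W$. Iterating this choice over all edges in topological order yields an $\omega$-dimensional generic storage code, completing the proof of the lemma and, in turn, underpinning the decoding argument invoked in Theorem~\ref{main}.
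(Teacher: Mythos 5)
Your overall architecture matches the paper's: both proofs assign kernels greedily (the paper by induction on repair stages via its Algorithm~1), both restrict the counting to the edges of a single stage to get the count $\binom{n\alpha+d\beta}{\omega-1}$, and both finish with the same union-bound and field-size condition. The difference lies in \emph{which} subspaces are declared forbidden when an edge $e$ is processed, and that difference is where your proposal has a genuine gap. You forbid only the subspaces $W=\text{span}(\text{ker}(P\setminus\{e\}))$ arising from path-independent sets $P$ containing $e$, so feasibility of the choice of $\bg_e$ stands or falls with your sub-claim that $\text{vspace}(\text{Tail}(e))\not\subseteq W$. Your justification --- that $P$ supplies edge-disjoint paths, so some path reaches $\text{Tail}(e)$ while avoiding the edges of $P\setminus\{e\}$, and this ``forces'' $\text{vspace}(\text{Tail}(e))$ to contain a vector outside $W$ --- does not follow: edge-disjointness is a purely graph-theoretic fact and says nothing by itself about the already-assigned kernels; a kernel carried on an edge disjoint from $P\setminus\{e\}$ can perfectly well lie in $W$. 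The standard way to make this rigorous is inductive: let $e'$ be the predecessor of $e$ on its path, observe that $(P\setminus\{e\})\cup\{e'\}$ is again a path-independent set of already-processed edges, conclude from the invariant that it is regular, and hence $\bg_{e'}\in\text{vspace}(\text{Tail}(e))\setminus W$. But this set is exactly where your invariant breaks down: you only assume regularity for path-independent sets \emph{lying in a common stage}, whereas the predecessor $e'$ of a stage-$(s{+}1)$ edge (for instance, of a helper edge $\mathsf{Out}_{h_i}^{s}\to\mathsf{h}_i^{s+1}$, whose tail is the stage-$s$ vertex $\mathsf{Out}_{h_i}^{s}$) lies in stage $s$, so $(P\setminus\{e\})\cup\{e'\}$ straddles two stages and your invariant does not apply to it. As stated, the induction cannot close.

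The paper's proof sidesteps precisely this trap in two ways. First, its Algorithm~1 forbids \emph{all} regular $(\omega-1)$-subsets $\zeta$ of the already-assigned pool $B_0$ that satisfy $\text{vspace}(\mathsf{Out}_{h_i}^{s})\not\subset\text{vspace}(\zeta)$: the non-containment condition is built into the definition of the forbidden family rather than derived from path-independence, so feasibility follows immediately from the counting bound. Second, the pool is seeded with the previous stage's edges, $B_0:=\{e\in \text{I}(\mathsf{Out}_i^{s}) : i\in A\}$, and kernels are copied across stages, which supplies exactly the cross-stage bookkeeping that your per-stage invariant lacks. To repair your argument you would either have to adopt this stronger forbidden family (the Yeung Algorithm~19.34 style that the paper follows), or strengthen your invariant to a genericity condition that applies to edge sets spanning consecutive stages. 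In fairness, your instinct that the non-containment sub-claim is ``where the real work is'' is sound --- the paper itself only asserts the final verification (``by construction, it can be seen\ldots'') --- but the Menger sketch you give does not supply that work.
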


\begin{proof}
Let $q$ be a prime power greater than ${n\alpha+d\beta \choose
\omega-1}$. We prove the statement by mathematical
induction on the number of repair stages in the refined
information flow graph. We want to maintain the inductive invariant that, in any stage, any path-independent set of edges is regular.

Consider a refined information graph in stage $-1$. First, note that any $\omega$-subset of the edges in stage $-1$ is path-independent. We claim that there exists a linear
code such that all these $\omega$-subsets are regular. For the first $\omega$
edges in stage $-1$, it is clear that they can be assigned linearly independent global encoding kernels.
For each of the subsequent edges in stage $-1$, we can pick a vector $\bx \not \in \text{vspace}(\zeta)$, where
$\zeta$ is any $(\omega-1)$-subset of edges that have already been assigned global encoding kernels. This can be done by picking a generator matrix of an $\omega$-dimensional Reed-Solomon code of length $n\alpha$. We can also assign the global encoding kernels sequentially, since
$$
\big|\bigcup_\zeta \text{vspace}(\zeta)\big| \leq {n\alpha \choose \omega-1} q^{\omega-1} < q^\omega.
$$

In stage $0$, let the global encoding kernels of $\alpha$
outgoing edges of every $\mathsf{In}_i$, $i=1,2,\dots,n$ be the same
as those of its $\alpha$ incoming edges. Since in stage $-1$, any
$\omega$-subset of the $n\alpha$ edges is regular, so is any $\omega$-subset
of the $n\alpha$ edges in stage~0.

Assume that for any refined information graph with $s\geq 0$ repair stages, (i.e., $s+1$ stages
including stage 0), a generic storage code has been constructed. By definition, any path-independent
$\omega$-subset of $E_s$ is regular with respect to the constructed
network code.

In stage $s+1$, there are $n-r$ auxiliary out-vertices of the surviving nodes
in stage $s$, $d$ new auxiliary h-vertices of the helpers, and $r$ newcomers. Let the set of indices of the $n-r$ surviving nodes be $A$.
For $i\in A$, $\mathsf{Out}_i^{s}$ has $\alpha$ incoming edges and
$\alpha$ outgoing edges connecting to $\mathsf{Out}_i^{s+1}$. Let the global encoding kernels of
these $\alpha$ outgoing edges be the same as those of the $\alpha$ incoming edges.
Let the $d$ helpers be indexed by $h_1, h_2, \ldots, h_d \in A$, where $h_1 < h_2 < \cdots < h_d$, and
the index of the newcomers be~$t,t+1,\dots,t+r-1$. It remains to determine the global encoding kernels for all the incoming edges $\mathsf{h}_i^{s+1}$, $i=1,2,\dots,d$, $\mathsf{In}_t,\mathsf{In}_{t+1},\dots,\mathsf{In}_{t+r-1}$, and $\mathsf{Out}_t^{s+1},\mathsf{Out}_{t+1}^{s+1},\dots,\mathsf{Out}_{t+r-1}^{s+1}$ in such a way that any path-independent $\omega$-subset of $E_{s+1}$ is regular. This can be done by Algorithm~\ref{generic_code}, which is adapted from~\cite[Algorithm 19.34]{Yeung_book}. While the algorithm in~\cite[Algorithm 19.34]{Yeung_book} considers all edges in a graph, Algorithm 1 only needs to consider the edges within the same stage. Besides, in Algorithm 1, the global encoding kernels of some edges are directly obtained from previous stages, which is different from the algorithm in~\cite[Algorithm 19.34]{Yeung_book}.

\renewcommand{\algorithmicrequire}{\textbf{Input:}}
\renewcommand{\algorithmicensure}{\textbf{Output:}}

\begin{algorithm}
\caption{Assign Global Encoding Kernels for the New Auxiliary h-vertices and Newcomers} \label{generic_code}
\begin{algorithmic}[1]

\REQUIRE $\{\bg_e : e \in \text{I}(\mathsf{Out}_i^{s}) \text{ for all }i \in A \}$ and $h_1, h_2, \ldots, h_d$

\ENSURE $\{\bg_e : e \in \text{I}(\mathsf{h}_i^{s+1}),i=1,2,\dots,d,$ \\ or $ e \in \text{I}(\mathsf{In}_t),\text{I}(\mathsf{In}_{t+1}),\dots,,\text{I}(\mathsf{In}_{t+r-1}),$ \\ or $ e \in \text{I}(\mathsf{Out}_t^{s+1}),\text{I}(\mathsf{Out}_{t+1}^{s+1}),\text{I}(\mathsf{Out}_{t+r-1}^{s+1})\}$

\STATE $B_0 := \{e \in \text{I}(\mathsf{Out}_i^{s}) \text{ for all }i \in A \}$;

\FOR {$i :=h_1, h_2, \ldots h_d$}

\FOR {$j :=1, 2, \ldots \beta$}

\STATE $e := $ the $j$-th incoming edges of $\mathsf{h}_i^{s+1}$ from $\mathsf{Out}_{h_i}^{s}$;

\STATE Choose a vector $\bx \in \text{vspace}(\mathsf{Out}_{h_i}^{s})$ such that $\bx \not \in \text{vspace}(\zeta)$, where $\zeta$
is any $\omega-1$-subset of $B_0$ such that $\zeta$ is regular and $\text{vspace}(\mathsf{Out}_{h_i}^{s}) \not \subset \text{vspace}(\zeta)$;

\STATE $\bg_e := \bx$ and $B_0 := B_0 \cup \{e\}$;

\ENDFOR

\ENDFOR

\FOR {$i :=h_1, h_2, \ldots h_d$}
\FOR {$j :=1,2,\ldots \beta$}
\STATE $e := $ the $j$-th incoming edges of $\mathsf{h}_i^{s+1}$ from $\mathsf{Out}_{h_i}^{s}$;
\FOR {$l :=t,t+1, \ldots t+r-1$}

\STATE $e^{\prime} := $ the $j$-th outgoing edges of $\mathsf{h}_i^{s+1}$ to $\mathsf{In}_{l}$;
\STATE $\bg_{e^{\prime}} := \bg_e$ ;

\ENDFOR
\ENDFOR
\ENDFOR

\FOR {$i :=t,t+1,\ldots t+r-1$}
\FOR {$j :=1, 2, \ldots \alpha$}

\STATE $e := $ the $j$-th incoming edges of $\mathsf{Out}_i^{s+1}$;

\STATE Choose a vector $\bx \in \text{vspace}(\mathsf{In}_i)$ such that $\bx \not \in \text{vspace}(\zeta)$, where $\zeta$
is any $\omega-1$-subset of $B_0$ such that $\zeta$ is regular and $\text{vspace}(\mathsf{In}_i) \not \subset \text{vspace}(\zeta)$;

\STATE $\bg_e := \bx$ and $B_0 := B_0 \cup \{e\}$;

\ENDFOR
\ENDFOR

\end{algorithmic}
\end{algorithm}

By construction, it can be seen that any path-independent $\omega$-subset of $E_{s+1}$ is regular. In the algorithm, the vector $\bx$ in line~5 can always be found. To see this, notice that there are at most $(n\alpha+d\beta)$ edges in $B_0$ and the number of possible choices of $\zeta$ is at most ${n\alpha+d\beta \choose \omega-1}$. Denote the dimension of
vspace($\mathsf{Out}_{h_i}^{s}$) by $\nu$. Since $\text{vspace}(\mathsf{Out}_{h_i}^{s}) \not \subset \text{vspace}(\zeta)$, the dimension of $\text{vspace}(\mathsf{Out}_{h_i}^{s}) \cap \text{vspace}(\zeta)$ is less than or equal to $\nu-1$. Thus,
\begin{align*}
\Big|\text{vspace}(\mathsf{Out}_{h_i}^{s}) \cap (\bigcup_\zeta \text{vspace}(\zeta))\Big| &\leq {n\alpha+d\beta \choose \omega-1} q^{\nu-1} \\
&< q^\nu = \left|\text{vspace}(\mathsf{Out}_{h_i}^{s})\right|.
\end{align*}
Likewise, the vector $\bx$ in line~21 can also be found.
\end{proof}

\section{}\label{app:no positive}
\begin{myle}\label{th:bound_sol}
If $ru=k$ for some positive integer value $u$, the problem $\min c_{\mT_1}$ has an optimal solution vector, whose components are either 0 or $r$.
\end{myle}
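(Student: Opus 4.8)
The plan is to remove the optimization over $\mT_1$ first, turning $\min c_{\mT_1}$ into the minimization of a single concave function, and then to exploit the extreme-point structure of the feasible polytope, treating the one genuinely delicate configuration by an explicit exchange argument. For a fixed $\bx$, each index $s\in\mK$ contributes $x_s\alpha$ to $c_{\mT_1}(\bx)$ if $s\in\mT_1$ and $(d-\sum_{i=0}^{s-1}x_i)\beta$ if $s\in\mT_2$, and these choices are independent across indices. Hence, from~\eqref{eq:c_mT},
\[
 f(\bx)\triangleq\min_{\mT_1}c_{\mT_1}(\bx)=x_0\alpha+\sum_{s\in\mK}\min\Big\{x_s\alpha,\ \big(d-\textstyle\sum_{i=0}^{s-1}x_i\big)\beta\Big\},
\]
and the problem becomes $\min_{\bx}f(\bx)$ over the polytope $P$ cut out by~\eqref{eq:x0}--\eqref{eq:sum_xs}. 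Since $f$ is the sum of a linear term and concave $\min$-terms, it is concave, so its minimum over the compact polytope $P$ is attained at a vertex of $P$.

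Next I would analyze the vertices. Because $P$ lies in the hyperplane~\eqref{eq:sum_xs}, it is $k$-dimensional, so at any vertex all but at most one coordinate are pinned to a box bound. The equality~\eqref{eq:sum_xs} together with nonnegativity forces $x_0\le k<n$ (using $n-r\ge d\ge k$), so the bound $x_0\le n$ is never active and a pinned $x_0$ must equal $0$. Thus at a vertex every $x_s$ with $s\ge1$ equals $0$ or $r$, except possibly one free coordinate. If that free coordinate is some $x_s$ with $s\ge1$, the remaining coordinates (all $0$ or $r$, with $x_0=0$) sum to a multiple of $r$, so~\eqref{eq:sum_xs} and $ru=k$ force $x_s$ to be a multiple of $r$ lying in $[0,r]$, hence $x_s\in\{0,r\}$; the vertex is already $\{0,r\}$-valued. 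The only remaining possibility is that the free coordinate is $x_0$ itself, which then equals $(u-m)r$, where $m$ is the number of $x_s$'s equal to $r$.

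Finally I would dispose of an optimal vertex with $x_0=(u-m)r\ge 2r$ by an exchange. I claim that moving one block of size $r$ off $x_0$ onto the \emph{smallest}-indexed zero coordinate $x_q$ does not increase $f$. Choosing the smallest such $q$ is the whole point: then $x_1=\cdots=x_{q-1}=r$, so the partial sums $\sum_{i<s}x_i$ for $s\le q$ form an arithmetic progression with common difference $r$; after the shift they are unchanged for $s>q$ and decreased by $r$ for $s\le q$. Writing out the change in $f$, the affected $\min$-terms telescope and collapse to
\[
 \Delta f=\min\{0,\ (d-x_0+r)\beta-\alpha r\}\le 0.
\]
Optimality of the vertex gives $\Delta f\ge0$, hence $\Delta f=0$, so the new feasible point is again optimal, with $x_0$ lowered by $r$ and one more coordinate equal to $r$. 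Since $x_0$ remains a multiple of $r$ and there are enough zero coordinates ($k-m\ge u-m$), iterating drives $x_0$ down to $0$ or $r$ while preserving optimality, producing an optimal solution all of whose components lie in $\{0,r\}$.

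The main obstacle is precisely this last case: verifying that the block move never increases $f$. Everything rests on the telescoping in the final step, which in turn relies on filling the smallest available zero so that the intermediate partial sums are consecutive; a careless choice of target would leave the early $\min$-terms strictly larger and make the sign of $\Delta f$ indeterminate. Securing that cancellation, rather than the routine LP/vertex bookkeeping of the first two steps, is where the real care is required.
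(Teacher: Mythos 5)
Your proof is correct, and it takes a genuinely different route from the paper's --- a difference that turns out to matter. The paper fixes $\mT_1$ and repeatedly perturbs pairs of components lying strictly inside $(0,r)$ by $\pm\delta$, with the direction of the perturbation decided by a case analysis on whether the two indices lie in $\{0\}\cup\mT_1$ or in $\mT_2$, and then uses divisibility of $k$ by $r$ to rule out terminating with a single such component. You instead eliminate $\mT_1$ at the outset, noting that $f(\bx)\triangleq\min_{\mT_1}c_{\mT_1}(\bx)$ decomposes termwise as $x_0\alpha+\sum_{s\in\mK}\min\{x_s\alpha,(d-\sum_{i<s}x_i)\beta\}$, which is concave, so the minimum sits at a vertex of the polytope; the only vertices not already $\{0,r\}$-valued have $x_0$ equal to a multiple of $r$ that is at least $2r$, and these you repair by the block shift. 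I verified the telescoping your argument hinges on: writing $P_s=\sum_{i<s}x_i$ and letting $q$ be the smallest index with $x_q=0$, the shift replaces the argument list $(P_1,\dots,P_q)$ by $(P_1-r,P_1,\dots,P_{q-1})$ while turning the coefficient $x_q=0$ into $r$, and since the vacated $q$-th term was $\min\{0,(d-P_q)\beta\}=0$, the net change is exactly $\Delta f=-r\alpha+\min\{r\alpha,(d-x_0+r)\beta\}\le 0$, as you claim; the structure needed for the next iteration is preserved. What your route buys is essential rather than cosmetic: it is the only one of the two arguments that controls $x_0$, whose box constraint is $[0,n]$ rather than $[0,r]$. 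The paper's exchanges never touch a component unless it lies strictly inside $(0,r)$ (and its divisibility step tacitly assumes $x_0$ is a multiple of $r$), and in fact the statement for a \emph{fixed} $\mT_1$ is false: taking $\mT_1=\emptyset$, $k=2$, $r=1$, $d=2$, $\alpha<\beta$, the function $c_\emptyset(\bx)=x_0\alpha+(2-x_0)\beta+(2-x_0-x_1)\beta$ is uniquely minimized at $(x_0,x_1,x_2)=(2,0,0)$, which is not $\{0,r\}$-valued. The version that is true --- and the one Theorem~\ref{thm:capacity_sol_form} actually needs, since it may choose $\mT_1^*$ and $\bx^*$ jointly --- is precisely the joint minimization you set up; note that the min over $\mT_1$ is what makes your shift cost-free (the target term contributes $\min\{0,\cdot\}=0$), and the same shift would indeed fail for fixed $\mT_1$ with $q\in\mT_2$, consistent with the counterexample. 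One small point to add for completeness: since the $x_i$ are integer-valued in the original problem, remark that your $\{0,r\}$-valued minimizer of the continuous relaxation is integer-feasible and hence optimal for the integer problem as well.
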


\begin{proof}
Let $\bx \triangleq (x_0, x_1, \ldots, x_k)$ be an optimal vector, which must exist since there are
only finite possible choices of $\bx$. Since $r$ is a divisor of $k$ and the sum of all $x_i$'s is equal to $k$, $\bx$ cannot have exactly one component whose value is positive and strictly less than $r$. Assume $\bx$ has two or more components which are positive and strictly less than $r$, and we show that we can reduce the number of such components without increasing the value of the objective function.

Suppose $0 < x_l < r$ and $0 < x_j < r$ for some $l, j$ such
that $0 \leq  l<j \leq k$. Denote $\delta$ as any integer which satisfies
\begin{equation}
0<\delta \leq \min \{r - x_l, x_l, r-x_j, x_j\}.\notag
\end{equation}
Let $\bx'$ be the same as $\bx$ except that its $l$-th and
$j$-th components are different from those in $\bx$.
We have
\begin{align*}
&c_{\mT_1}(\bx)= \sum_{s\in\{0\}\cup\mT_1}x_s\alpha+\sum_{s\in\mT_2 \atop s\leq l}(d-\sum_{i=0}^{s-1}x_i)\beta
+\sum_{s\in\mT_2 \atop l<s\leq j}(d-\sum_{i=0}^{s-1}x_i)\beta+\sum_{s\in\mT_2 \atop j<s}(d-\sum_{i=0}^{s-1}x_i)\beta.
\end{align*}
and
\begin{align*}
&c_{\mT_1}(\bx^{\prime})=\sum_{s\in\{0\}\cup\mT_1}x_s'\alpha+\sum_{s\in\mT_2 \atop s\leq l}(d-\sum_{i=0}^{s-1}x_i')\beta
+\sum_{s\in\mT_2 \atop l<s\leq j}(d-\sum_{i=0}^{s-1}x_i')\beta+\sum_{s\in\mT_2 \atop j<s}(d-\sum_{i=0}^{s-1}x_i')\beta.
\end{align*}

In the following three cases: (i) $l, j \in \{0\}\cup \mT_1$, (ii) $l, j \in \mT_2$, or (iii) $l\in \mT_2, j \in \{0\}\cup \mT_1$, let $x_l'=x_l + \delta$ and $x_j'=x_j -\delta$.
Since in these cases $\sum_{s\in\{0\}\cup\mT_1}x_s\alpha \geq \sum_{s\in\{0\}\cup\mT_1}x_s'\alpha$, $$\sum_{s\in\mT_2 \atop s\leq l}(d-\sum_{i=0}^{s-1}x_i)\beta=\sum_{s\in\mT_2 \atop s\leq l}(d-\sum_{i=0}^{s-1}x_i')\beta,$$
$$
\sum_{s\in\mT_2 \atop l<s\leq j}(d-\sum_{i=0}^{s-1}x_i)\beta\geq\sum_{s\in\mT_2 \atop l<s\leq j}(d-\sum_{i=0}^{s-1}x_i')\beta,
$$
and
$$
\sum_{s\in\mT_2 \atop j<s}(d-\sum_{i=0}^{s-1}x_i)\beta=\sum_{s\in\mT_2 \atop j<s}(d-\sum_{i=0}^{s-1}x_i^{\prime})\beta.
$$
Therefore,
\begin{equation*}
c_{\mT_1}(\bx) \geq c_{\mT_1}(\bx^{\prime}).
\end{equation*}
Hence, in these cases, if there are two positive components whose values are strictly less than $r$,
the value of $c_{\mT_1}(\bx)$ can be reduced by increasing the first component and decreasing the second by the same amount.
This procedure can be repeated until the first component rises to $r$ or the second one drops to 0. By repeating the argument, we can obtain an optimal solution that has no more than two components that are positive and strictly less than $r$.

In the remaining case where $l\in\{0\}\cup \mT_1,  j \in \mT_2$, we have
$$
c_{\mT_1}(\bx) - c_{\mT_1}(\bx^{\prime})=(x_l-x_l')\alpha-\sum_{s\in\mT_2 \atop l<s\leq j}(x_l-x_l')\beta.
$$
If $\alpha\geq \sum_{s\in\mT_2 \atop l<s\leq j}\beta$, we can set $x_l'=x_l-\delta$ and $x_j'=x_j+\delta$; otherwise, we can set $x_l'=x_l+\delta$ and $x_j'=x_j-\delta$. Thus we obtain $c_{\mT_1}(\bx) \geq c_{\mT_1}(\bx^{\prime}).$
By repeating the same argument as above, we can also conclude that, there is an optimal solution that has no more than two components that are positive and strictly less than $r$.
\end{proof}

\end{document}